\newtheorem{remark}{Remark}
\newtheorem{lemma}{Lemma}
\newtheorem{theorem}{Theorem}
\newtheorem{definition}{Definition}
\begin{document}

\title{Source Delay in Mobile Ad Hoc Networks}

\author{Juntao~Gao,
				Yulong~Shen,~\IEEEmembership{Member,~IEEE,}
				Xiaohong~Jiang,~\IEEEmembership{Senior~Member,~IEEE}
        and~Jie~Li,~\IEEEmembership{Senior~Member,~IEEE}% <-this % stops a space
				%\thanks{M. Shell is with the Department
%of Electrical and Computer Engineering, Georgia Institute of Technology, Atlanta,
%GA, 30332 USA e-mail: (see http://www.michaelshell.org/contact.html).}% <-this % stops a space
				%%\thanks{J. Doe and J. Doe are with Anonymous University.}% <-this % stops a space
				%%\thanks{Manuscript received April 19, 2005; revised December 27, 2012.}
}

% make the title area
\maketitle

% As a general rule, do not put math, special symbols or citations
% in the abstract or keywords.
\begin{abstract}
Source delay, the time a packet experiences in its source node, serves as a fundamental quantity for delay performance analysis in networks. 
However, the source delay performance in highly dynamic mobile ad hoc networks (MANETs) is still largely unknown by now.
This paper studies the source delay in MANETs based on a general packet dispatching scheme with dispatch limit $f$ (PD-$f$ for short), where a same packet will be dispatched out up to $f$ times by its source node such that packet dispatching process can be flexibly controlled through a proper setting of $f$.
We first apply the Quasi-Birth-and-Death (QBD) theory to develop a theoretical framework to capture the complex packet dispatching process in PD-$f$ MANETs. 
With the help of the theoretical framework, we then derive the cumulative distribution function as well as mean and variance of the source delay in such networks.
Finally, extensive simulation and theoretical results are provided to validate our source delay analysis and illustrate how  source delay in MANETs are related to network parameters.

%Our source delay results and theoretical framework could be incorporated into the overall packet delay analysis in MANETs.s a fundamental delay quantity in mobile ad hoc networks (MANETs), whose analysis serves as the building block of understanding the critical packet delay performance in those networks. However, little efforts have been addressed on studying the source delay performance by now. In this paper, we conduct a thorough study of the source delay in MANETs, where a general packet dispatch scheme is adopted to dispatch packets for delivery. 
\end{abstract}

% Note that keywords are not normally used for peerreview papers.
\begin{IEEEkeywords}
MANETs, packet dispatch, source delay, mean, variance.
\end{IEEEkeywords}

% For peer review papers, you can put extra information on the cover
% page as needed:
% \ifCLASSOPTIONpeerreview
% \begin{center} \bfseries EDICS Category: 3-BBND \end{center}
% \fi
%
% For peerreview papers, this IEEEtran command inserts a page break and
% creates the second title. It will be ignored for other modes.
\IEEEpeerreviewmaketitle

\section{Introduction}

\IEEEPARstart{M}{obile} ad hoc networks (MANETs) represent a class of self-configuring and infrastructureless networks with mobile nodes.
As MANETs can be rapidly deployed, reconfigured and extended at low cost, they are highly appealing for a lot of critical applications, like disaster relief, emergency rescue, battle field communications, environment monitoring, etc \cite{Andrews_CM08,Goldsmith_CM11}. To facilitate the application of MANETs in providing delay guaranteed services in above applications, understanding the delay performance of these networks is of fundamental importance \cite{Hanzo_Survey07,Chen_Network07}.

Source delay, the time a packet experiences in its source node, is an indispensable behavior in any network. Since the source delay is a delay quantity common to all MANETs, it serves as a fundamental quantity for delay performance analysis in MANETs. 
For MANETs without packet redundancy \cite{Grossglauser_TON02,Neely_IT05} and with one-time broadcast based packet redundancy \cite{Gao_WiOpt13}, the source delay actually serves as a practical lower bound for and thus constitutes an essential part of overall delay in those networks.
%The inverse of source delay, i.e., the rate at which a source node serves its packets, serves as a throughput upper bound and thus indicates the possible maximum throughput we could expect from a network.
The source delay is also an indicator of packet lifetime, i.e., the maximum time a packet could stay in a network; in particular, it lower bounds the lifetime of a packet and thus serves as a crucial performance metric for MANETs with packet lifetime constraint.

%All above characteristics of the source delay make it a fundamental delay quantity to study and its analysis becomes the building block of understanding delay performances in MANETs.

%which measures the time a packet has to spend in a MANET before it reaches its destination, and thus it is a crucial performance metric for MANETs with limitations on packet life-time. If the source delay of a packet is beyond its life-time, there is no hope it can be delivered to its destination.  

%Actually, the source delay together with the delay a packet experiences during packet delivery process in a MANET (called delivery delay hereafter) composes the overall delay a packet experiences in the network. While delivery delay highly depends on and thus may dramatically change with routing protocols adopted,  source delay is independent of routing algorithms and thus it remains the same for any routing protocols. Therefore, source delay constitutes the common part of the overall delay of any routing in MANETs and its analysis serves as the building block of understanding delay performances in such networks.

Despite much research activity on delay performance analysis in MANETs (see section~\ref{section:related_works} for related works), the source delay performance of such networks is still largely unknown by now. 
The source delay analysis in highly dynamic MANETs is challenging, since it involves not only complex network dynamics like node mobility, but also issues related to medium contention, interference, packet generating and packet dispatching.  
This paper devotes to a thorough study on the source delay in MANETs under the practical scenario of limited buffer size and also a general packet dispatching scheme with dispatch limit $f$ (PD-$f$ for short). With the PD-$f$ scheme, a same packet will be dispatched out up to $f$ times by its source node such that packet dispatching process can be flexibly controlled through a proper setting of $f$.
The main contributions of this paper are summarized as follows.

\begin{itemize}
%\item We study the distribution of the source delay in MANETs under a general packet dispatch scheme with dispatch limit $f$ (PD-$f$ for short), where each packet could be dispatched by its source node to at most $f$ times. This packet dispatch scheme is very general and covers many dispatch schemes as special cases, such as schemes without packet redundancy ($f=1$) \cite{Grossglauser_TON02,Neely_IT05,Gamal_IT06} and schemes with packet redundancy ($f>1$) \cite{Sharma_TON07,Gao_WiOpt13,Liu_TON12}. 

\item We first apply the Quasi-Birth-and-Death (QBD) theory to develop a theoretical framework to capture the complex packet dispatching process in a PD-$f$ MANET. The theoretical framework is powerful in the sense it enables complex network dynamics to be incorporated into source delay analysis, like node mobility, medium contention, interference, packet transmitting and packet generating processes. 

%We first apply the Quasi-Birth-and-Death (QBD) theory to develop a theoretical framework to capture the complex packet dispatching process in PD-$f$ MANETs. 
%With the help of the theoretical framework, we then derive the cumulative distribution function as well as mean and variance of the source delay in such networks.
%Finally, extensive simulation and theoretical results are provided to validate our source delay analysis and demonstrate how  source delay in MANETs can be controlled through a proper setting of network parameters. 

\item With the help of the theoretical framework, we then derive the cumulative distribution function (CDF) as well as mean and variance of the source delay in the considered MANET. By setting $f=1$ in a PD-$f$ MANET, the corresponding source delay actually serves as a lower bound for overall delay.

\item Extensive simulation results are provided to validate our theoretical framework and the source delay models. Based on the theoretical source delay models, we further demonstrate how source delay in MANETs is related to network parameters, such as packet dispatch limit, buffer size and packet dispatch probability. 
%the number of network nodes.

%With the help of the theoretical framework, we derived the closed-form distribution of the source delay, which is validated through extensive simulations. Based on the distribution, we also obtained the mean and variance of source delay. We further explored the impact of network parameters on source delay.

\end{itemize}

The rest of this paper is organized as follows. Section~\ref{section:network_models} introduces preliminaries involved in this source delay study. 
A QBD based theoretical framework is developed to model the source delay in Section~\ref{section:framework}. We derive in Section~\ref{section:delay} the CDF as well as mean and variance of the source delay. 
Simulation/numerical studies and the corresponding discussions are provided in Section~\ref{section:numerical}. Finally, we introduce related works regarding delay performance analysis in MANETs in Section~\ref{section:related_works} and conclude the  paper in Section~\ref{section:conclusion}.

\section{Preliminaries} \label{section:network_models}
In this section, we introduce the basic system models, the Medium Access Control (MAC) protocol and the packet dispatching scheme involved in this study.

\subsection{System Models}

\begin{figure}[!t]
    \centering
    {
    \subfloat[A snapshot of a cell partitioned MANET.]
    {\includegraphics[width=2in]{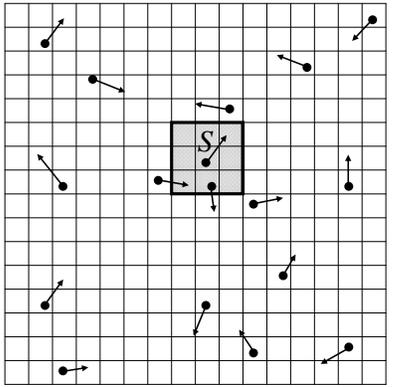} \label{fig:network_partition}}
   	\hfill
    \subfloat[Illustration of MAC-EC protocol.]
    {\includegraphics[width=2in]{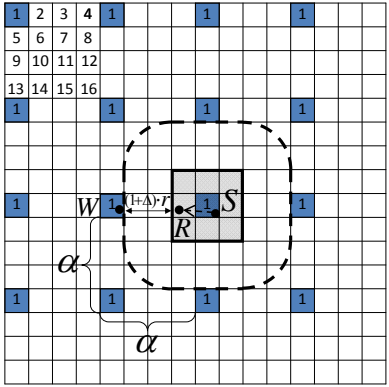} \label{fig:network_mac} }
    }
    \caption{An example of a cell partitioned MANET with a MAC protocol.}
    \label{fig:MANET}
\end{figure}

\textbf{Network Model and Mobility Model:} We consider a time slotted torus MANET of unit area. Similar to previous works, we assume that the network area is evenly partitioned into $m \times m$ cells as shown in Fig.~\ref{fig:network_partition}  \cite{Gamal_IT06,Sharma_TON07,Ciullo_TON11,Lipan_TMC12}. There are $n$ mobile nodes in the network and they randomly move around following the Independent and Identically Distributed (IID) mobility model \cite{Liu_TON12,Neely_IT05,Liu_TWC12}. According to the IID mobility model, each node first moves into a randomly and uniformly selected cell at the beginning of a time slot and then stays in that cell during the whole time slot. 

\textbf{Communication Model:} We assume that all nodes transmit data through one common wireless channel, and each node (say $S$ in Fig.~\ref{fig:network_partition}) employs the same transmission range $r=\sqrt{8}/m$ to cover $9$ cells, including $S$'s current cell and its $8$ neighboring cells. To account for mutual interference and interruption among concurrent transmissions, the commonly used protocol model is adopted \cite{Gupta_IT00,Kulkarni_IT04,Liu_TON12,Ciullo_TON11}. According to the protocol model, node $i$ could successfully transmit to another node $j$ if and only if $d_{ij} \leq r$ and for another simultaneously transmitting node $k \neq i,j$, $d_{kj} \geq (1+\Delta)\cdot r$, where $d_{ij}$ denotes the Euclidean distance between node $i$ and node $j$ and $\Delta \geq 0$ is the guard factor to prevent interference. In a time slot, the data  that can be  transmitted during a successful transmission is normalized to one packet.

\textbf{Traffic Model:} We consider the widely adopted permutation traffic model \cite{Liu_TON12,Ciullo_TON11,Liu_TWC12}, where there are $n$ distinct traffic flows in the network.
Under such traffic model, each node acts as the source of one traffic flow and at the same time the destination of another traffic flow.
The packet generating process in each source node is assumed to be a Bernoulli process, where a packet is generated by its source node with probability $\lambda$ in a time slot \cite{Neely_IT05}. 
We assume that each source node has a first-come-first-serve queue (called local-queue hereafter) with limited buffer size $M > 0$ to store its locally generated packets.
Each locally generated packet in a source node will be inserted into the end of its local-queue if the queue is not full, and dropped otherwise.

\subsection{MAC Protocol}
We adopt a commonly used MAC protocol based on the concept of Equivalent-Class to address wireless medium access issue in MANETs \cite{Liu_TON12,Ciullo_TON11,Lipan_TMC12,Kulkarni_IT04}. 
As illustrated in Fig.~\ref{fig:network_mac} that an Equivalent-Class (EC) is consisted of a group of cells with any two of them being separated by a horizontal and vertical distance of some integer multiple of $\alpha$ ($1\leq \alpha \leq m$) cells. 
 Under the EC-based MAC protocol (MAC-EC), the whole network cells are divided into $\alpha^2$ ECs and ECs are then activated alternatively from time slot to time slot. We call cells in an activated EC as active cells, and only a node in an active cell could access the wireless channel and do packet transmission. If there are multiple nodes in an active cell, one of them is selected randomly to have a fair access to wireless channel. 
%(each EC is labeled with a unique number and all shaded cells in Fig.~\ref{fig:network_mac} belong to the same EC $1$).

To avoid interference among concurrent transmissions under the MAC-EC protocol, the parameter $\alpha$ should be set properly.
Suppose a node (say $S$ in Fig.~\ref{fig:network_mac}) in an active cell is transmitting to node $R$ at the current time slot, and another node $W$ in one adjacent active cell is also transmitting simultaneously. As required by the protocol model, the distance $d_{WR}$ between $W$ and $R$ should satisfy the following condition to guarantee successful transmission from $S$ to $R$, 
\begin{align} \label{eq:alpha_set1}
d_{WR} &\geq (1+\Delta)\cdot r
\end{align}
Notice that $d_{WR} \geq (\alpha-2)/m$, we have 
\begin{align} \label{eq:alpha_set2}
(\alpha-2)/m &\geq (1+\Delta)\cdot r	
\end{align} 
Since $\alpha \leq m$ and $r=\sqrt{8}/m$, $\alpha$ should be set as 
\begin{align}
\alpha &= \min\{\lceil (1+\Delta)\sqrt{8}+2\rceil,m\},
\end{align}
where the function $\lceil x \rceil$ returns the least integer value greater than or equal to $x$.

\subsection{PD-$f$ Scheme}
Once a node (say $S$) got access to the wireless channel in a time slot, it then executes the PD-$f$ scheme summarized in Algorithm~\ref{algorithm:PD-f} for packets dispatch.

\begin{remark}
The  PD-$f$ scheme is general and covers many widely used packet dispatching schemes as special cases,
like the ones without packet redundancy \cite{Grossglauser_TON02,Neely_IT05,Gamal_IT06} when $f=1$ and  only unicast transmission is allowed, the ones with controllable packet redundancy \cite{Small_WDTN05,Liu_TWC11,Liu_TON12} when $f>1$ and only unicast transmission is allowed, and the ones with uncontrollable packet redundancy \cite{Williams_MobiHoc02,Gao_WiOpt13} when $f \geq 1$ and broadcast transmission is allowed. 

%By setting $f=1$ and allowing only unicast transmission, the PD-$f$ scheme reduces to dispatching schemes without packet redundancy \cite{Grossglauser_TON02,Neely_IT05,Gamal_IT06}, by setting $f>1$ and allowing only unicast transmission, but also in MANETs with packet redundancy, where packet dispatching is conducted in a broadcast way . By setting $f>1$, the PD-$f$ scheme reduces to more general packet dispatching schemes in MANETs with packet redundancy, where packet dispatching could be conducted in any way. 
\end{remark}

\floatname{algorithm}{Algorithm}

\begin{algorithm}[!ht]
\caption{PD-$f$ scheme}
\label{algorithm:PD-f}
\begin{algorithmic}[1]
	\IF{$S$ has packets in its local-queue}
		\STATE $S$ checks whether its destination $D$ is within its transmission range;
	
		\IF{$D$ is within its transmission range}
			\STATE $S$ transmits the head-of-line (HoL) packet in its local-queue to $D$; \{source-destination transmission\}
			\STATE $S$ removes the HoL packet from its local-queue;
			\STATE $S$ moves ahead the remaining packets in its local-queue;
		\ELSE
			\STATE With probability $q$ ($0 < q < 1$), $S$ dispatches the HoL packet;
			\IF{$S$ conducts packet dispatch}
				\STATE $S$ dispatches the HoL packet for one time; \{packet-dispatch transmission\}
				\IF{$S$ has already dispatched the HoL packet for $f$ times}
					\STATE $S$ removes the HoL packet from its local-queue;
					\STATE $S$ moves ahead the remaining packets in its local-queue;
				\ENDIF
			\ENDIF
		\ENDIF
	\ELSE
		\STATE $S$ remains idle;
	\ENDIF
\end{algorithmic}
\end{algorithm}

%\begin{remark}
%The packet dispatch procedure is a common function component of routing algorithms in MANETs \cite{Sharma_TON07,Liu_TON12}. Our packet dispatch scheme, PD-$f$, is very general and covers many packet dispatch schemes as special cases. For example, by setting $f=1$, the PD-$f$ scheme reduces to schemes without packet redundancy \cite{Grossglauser_TON02,Neely_IT05,Gamal_IT06} and by setting $f>1$, the PD-$f$ scheme reduces to schemes with packet redundancy \cite{Sharma_TON07,Gao_WiOpt13,Liu_TON12}. 
%\end{remark}

\section{QBD-Based Theoretical Framework} \label{section:framework}

In this section, a QBD-based theoretical framework is developed to capture the packet dispatching process in a PD-$f$ MANET. This framework will help us to analyze source delay in Section~\ref{section:delay}.

\subsection{QBD Modeling}

Due to the symmetry of source nodes, we only focus on a source node $S$ in our analysis. We adopt a two-tuple $\mathbf{X}(t)=(L(t), J(t))$ to define the state of the local-queue in $S$ at time slot $t$, where $L(t)$ denotes the number of packets in the local-queue at slot $t$ and $J(t)$ denotes the number of packet dispatches that have been conducted for the current head-of-line packet by slot $t$, here $0 \leq L(t) \leq M$, $0 \leq J(t) \leq f-1$ when $1 \leq L(t) \leq M$, and $J(t)=0$ when $L(t)=0$. 

Suppose that the local-queue in $S$ is at state $(l, j)$ in the current time slot, all the possible state transitions that may happen at the next time slot are summarized in Fig.~\ref{fig:state_transition}, where 
\begin{itemize}

\item $I_{0}(t)$ is an indicator function, taking value of $1$ if $S$ conducts source-destination transmission in the current time slot, and taking value of $0$ otherwise;

\item $I_{1}(t)$ is an indicator function, taking value of $1$ if $S$ conducts packet-dispatch transmission in the current time slot, and taking value of $0$ otherwise;

\item $I_{2}(t)$ is an indicator function, taking value of $1$ if $S$ conducts neither source-destination nor packet-dispatch transmission in the current time slot, and taking value of $0$ otherwise;

\item $I_{3}(t)$ is indicator function, taking value of $1$ if $S$ locally generates a packet in the current time slot, and taking value of $0$ otherwise.

\end{itemize}

\begin{figure}[!t]
\centering
{
			\centerline
			{
			\subfloat[State transition when $l=0$.]{\includegraphics[width=1.7in]{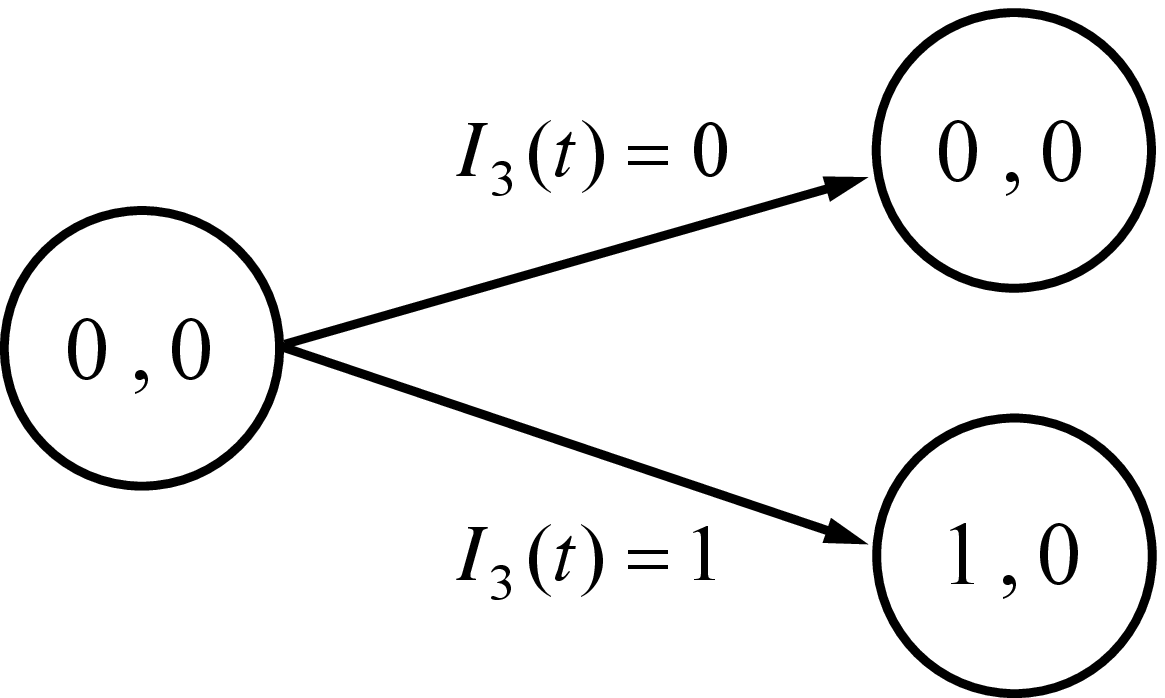} \label{fig:l0}}
			}
			\centerline
			{
			\subfloat[State transition when $1 \leq l \leq M-1$.]{\includegraphics[width=3.2in]{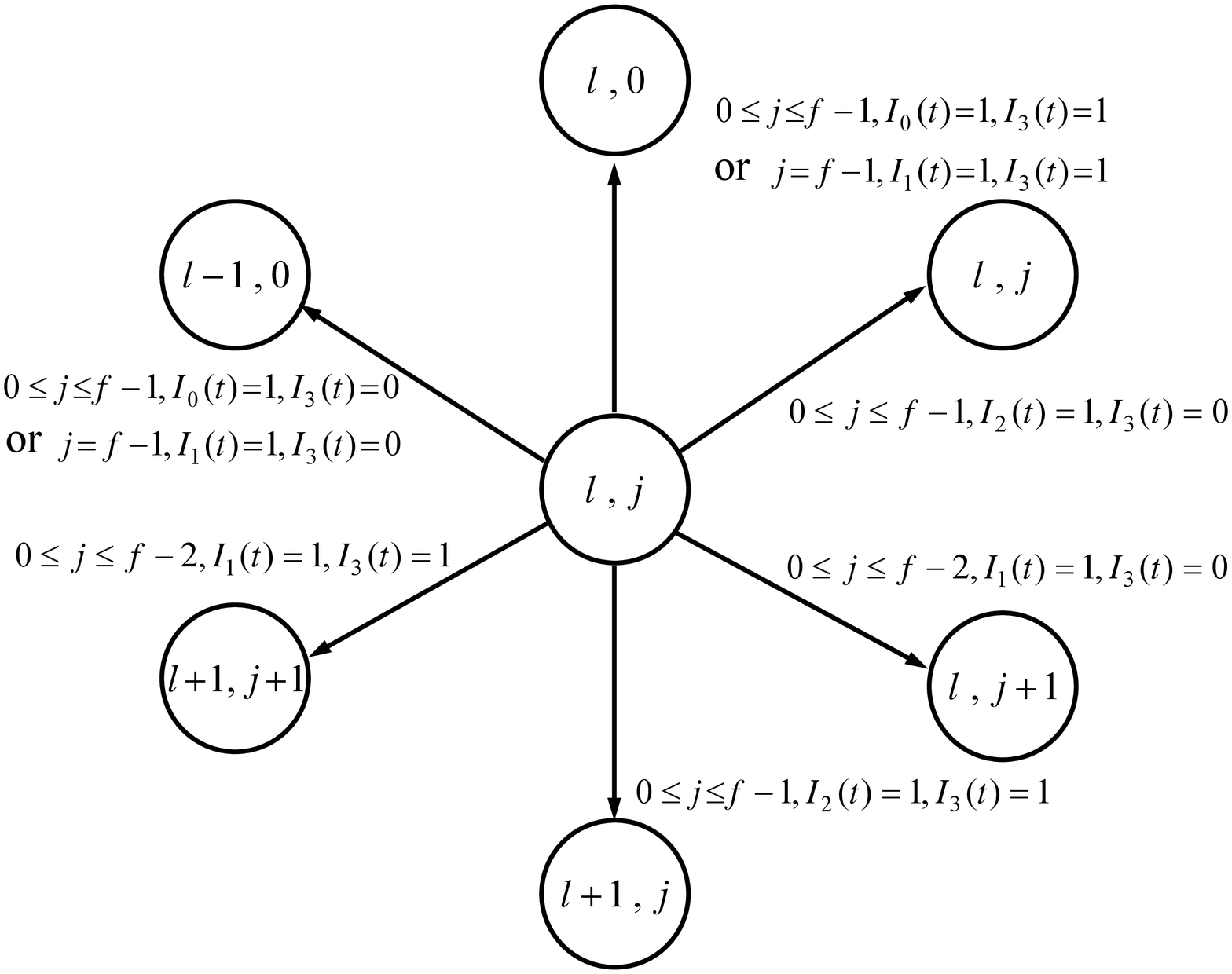} \label{fig:ll}}
			}
			\centerline
			{
			\subfloat[State transition when $l=M$.]{\includegraphics[width=3.2in]{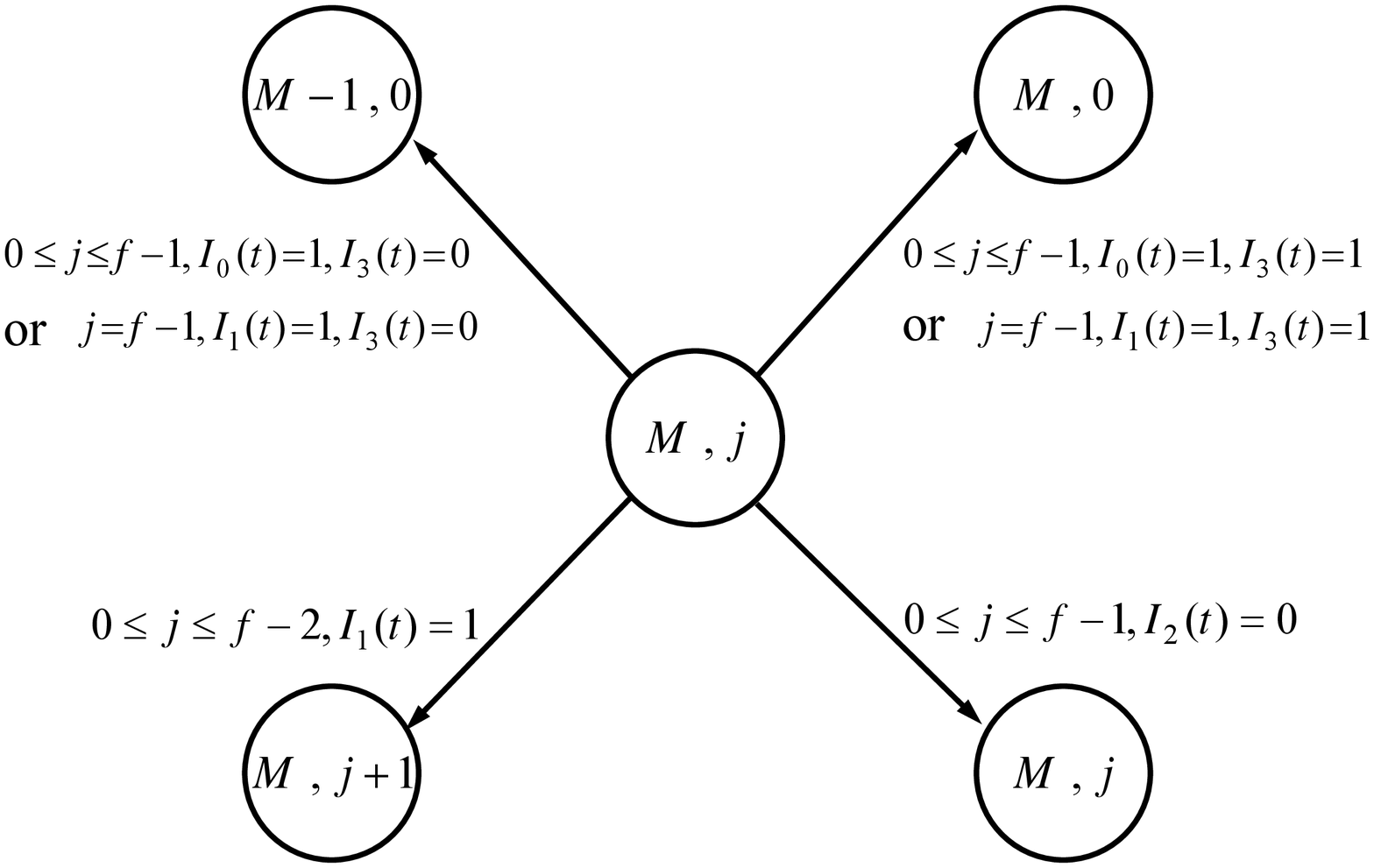} \label{fig:lM}}
			%\hfill
			%\subfloat[SD Transition Scenario]{\includegraphics[width=1.7in]{fig_SD_ijk} \label{fig:SD}}
			}
}
\caption{State transitions from state $(l, j)$ of the local-queue.}
\label{fig:state_transition}
\end{figure}

From Fig.~\ref{fig:state_transition} we can see that as time evolves, the state transitions of the local-queue in $S$ form a two-dimensional QBD process \cite{Latouche_Book99} 
\begin{align}	\label{eq:QBD}
\{\mathbf{X}(t), t = 0, 1, 2, \cdots \}, 
\end{align}
on state space
\begin{align}	\label{eq:QBD_space}
\big\{ \{(0,0)\} \cup \{(l,j)\}; 1 \leq l \leq M, 0 \leq j \leq f-1 \big\}.
\end{align}
Based on the transition scenarios in Fig.~\ref{fig:state_transition}, the overall transition diagram of above QBD process is illustrated in Fig.~\ref{fig:QBD}. 

\begin{figure*}[!t]
\centering
\includegraphics[width=4.5in]{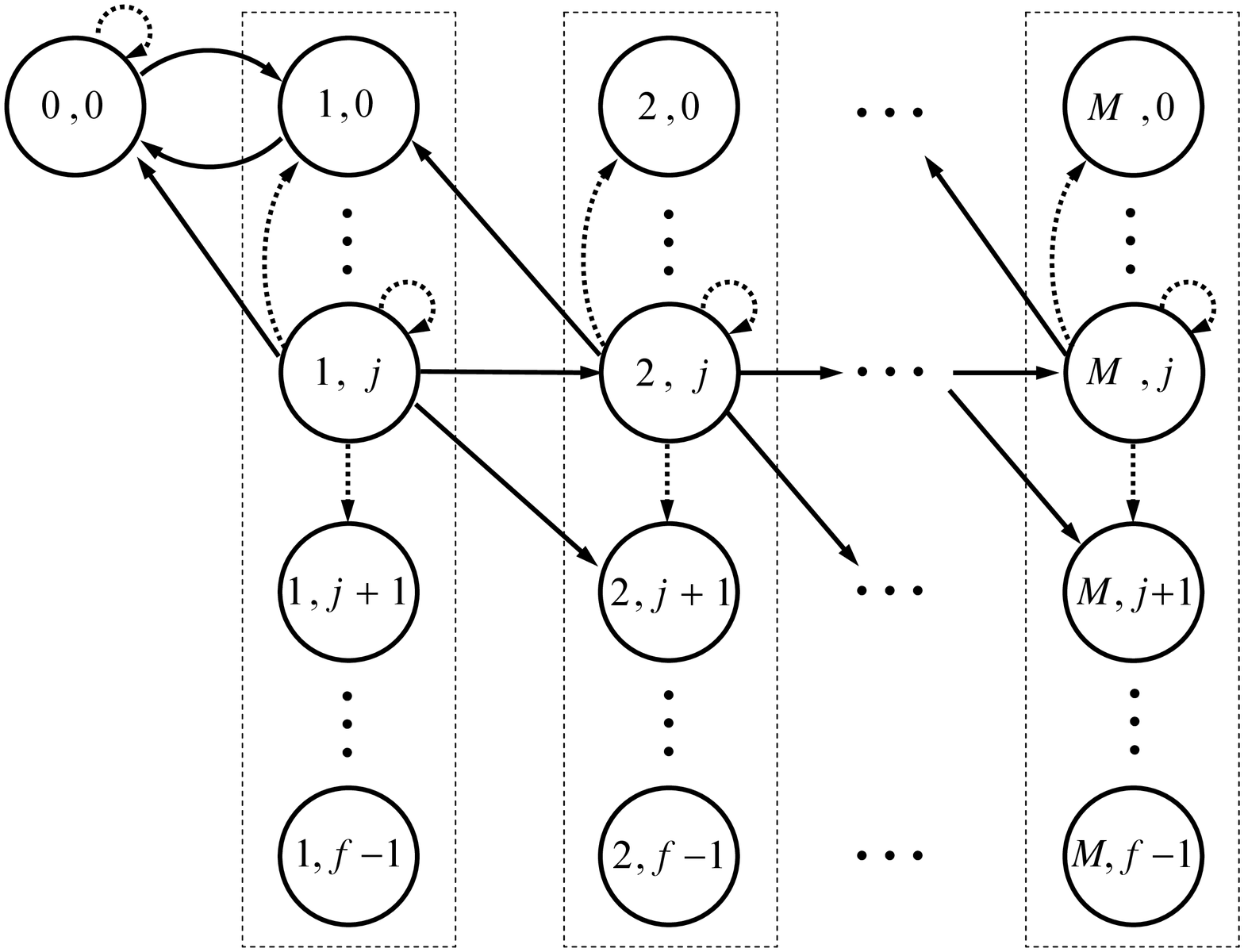}
\caption{State transition diagram for the QBD process of local-queue. For simplicity, only transitions from typical states $(l, j)$ are illustrated for $1 \leq l \leq M$, while other transitions are the same as that shown in Fig.~\ref{fig:state_transition}.}
\label{fig:QBD}
\end{figure*}

\begin{remark}
The QBD framework is powerful in the sense it enables main network dynamics to be captured, like the dynamics involved in the packet generating process and these involved in the source-destination and packet-dispatch transmissions (i.e., node mobility, medium contention, interference and packet transmitting). 
\end {remark}

\subsection{Transition Matrix and Some Basic Results}
%As shown in Fig.~\ref{fig:QBD}, there are $1+Mf$ two-tuple states in total for the local-queue in $S$. To build the transition matrix of the QBD process, all these $1+Mf$ two-tuple states are arranged first in a left-to-right way and then in a top-to-down way with the following order: state $(l, j)$ always precedes state $(l', j')$ if $l < l'$, and for $l=l'$, state $(l, j)$ precedes state $(l, j')$ if $j < j'$, i.e., $\{(0, 0), (1, 0), (1, 1), \cdots, (1, f-1), (2, 0), \cdots, (M, f-1)\}$. 
As shown in Fig.~\ref{fig:QBD} that there are in total $1+M\cdot f$ two-tuple states for the local-queue in $S$. To construct the transition matrix of the QBD process, we arrange all these $1+M\cdot f$ states in a left-to-right and top-to-down way as follows: $\{(0, 0), (1, 0), (1, 1), \cdots, (1, f-1), (2, 0), (2, 1), \cdots, (2, f-1),\cdots, (M,0), \cdots, (M, f-1)\}$. 
Under such state arrangement, the corresponding state transition matrix $\mathbf{P}$ of the QBD process can be determined as 
\begin{align}	\label{eq:QBD_matrix}
\bf{P}=& 	
\left[
					\begin{array}{ccccc}
									\mathbf{B_1} &	\mathbf{B_0}	&		&		&	 \\ 
									\bf{B_2} &	\bf{A_1}	&	\bf{A_0}	&		&	 \\
										&	\bf{A_2}	&	\ddots	&	\ddots	&	 \\
										&						&	\ddots	&	\bf{A_1}	&	\bf{A_0} \\
									 	&						&					&	\bf{A_2}	&	\bf{A_M} \\
					\end{array}
\right],  
\end{align}
where the corresponding sub-matrices in matrix $\bf{P}$ are defined as follows:
\begin{itemize}
\item $\mathbf{B_0}$: a matrix of size $1 \times f$, denoting the transition probabilities from $(0,0)$ to $(1,j)$, $0 \leq j \leq f-1$.
\item $\mathbf{B_1}$: a matrix of size $1 \times 1$, denoting the transition probability from $(0,0)$ to $(0,0)$.
\item $\mathbf{B_2}$: a matrix of size $f \times 1$, denoting the transition probabilities from $(1,j)$ to $(0,0)$, $0 \leq j \leq f-1$.
\item $\mathbf{A_0}$: a matrix of size $f \times f$, denoting the transition probabilities from $(l,j)$ to $(l+1,j')$, $1 \leq l \leq M-1, 0 \leq j, j' \leq f-1$.
\item $\mathbf{A_1}$: a matrix of size $f \times f$, denoting the transition probabilities from $(l,j)$ to $(l,j')$, $1 \leq l \leq M-1, 0 \leq j, j' \leq f-1$.
\item $\mathbf{A_2}$: a matrix of size $f \times f$, denoting the transition probabilities from $(l,j)$ to $(l-1,j')$, $2 \leq l \leq M, 0 \leq j, j' \leq f-1$.
\item $\mathbf{A_M}$: a matrix of size $f \times f$, denoting the transition probabilities from $(M,j)$ to $(M,j')$, $0 \leq j, j' \leq f-1$.
\end{itemize}

Some basic probabilities involved in the above sub-matrices are summarized in the following Lemma.
\begin{lemma} \label{lemma:basic_p}
For a given time slot, let $p_{0}$ be the probability that $S$ conducts a source-destination transmission, let $p_{1}$ be the probability that $S$ conducts a packet-dispatch transmission, and let $p_2$ be the probability that $S$ conducts neither source-destination nor packet-dispatch transmission. Then, we have
\begin{align}
p_{0} &=	\frac{1}{\alpha^2}\bigg\{ \frac{9n-m^2}{n(n-1)}-\bigg(\frac{m^2-1}{m^2}\bigg)^{n-1}\frac{8n+1-m^2}{n(n-1)}\bigg\},	\\
p_{1} &= \frac{q(m^2-9)}{\alpha^2(n-1)}\bigg\{1-\bigg(\frac{m^2-1}{m^2}\bigg)^{n-1}\bigg\}, \\
p_{2} &= 1-p_{0}-p_{1}.
\end{align}

\end{lemma}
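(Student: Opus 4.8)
The plan is to read $p_0$ and $p_1$ as the per-slot probabilities that a backlogged $S$ carries out each kind of transmission, and to factor every such event into two ingredients supplied by the IID mobility and the MAC-EC rule: (i) \emph{channel access}, i.e.\ $S$'s cell coinciding with the active EC and $S$ winning the fair lottery among the nodes sharing that cell, and (ii) the \emph{location of the destination} $D$ relative to the nine-cell transmission neighbourhood of $S$. Since each of the $n-1$ nodes other than $S$ independently chooses one of the $m^2$ cells uniformly, $S$'s cell is the active EC with probability $1/\alpha^2$, and, conditioned on that, the number of contenders in $S$'s cell is binomially distributed and independent of $D$'s position. The whole computation then reduces to a conditioning argument on where $D$ lands.

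For $p_0$ I would condition on the position of $D$: it lies in $S$'s own cell with probability $1/m^2$, in one of the eight neighbouring cells with probability $8/m^2$, and outside the neighbourhood with probability $(m^2-9)/m^2$. In the first two cases $D$ is within range, so a source-destination transmission occurs exactly when $S$ wins the access lottery; in the last case no source-destination transmission is possible. The point requiring care is that when $D$ shares $S$'s cell it is \emph{itself} a contender, whereas when $D$ is a neighbour it is not. Writing $K''$ for the number of the remaining $n-2$ nodes falling in $S$'s cell, which is $\mathrm{Binomial}(n-2,1/m^2)$, the access probability equals $E[1/(K''+2)]$ in the same-cell case and $E[1/(K''+1)]$ in the neighbour case, giving
\[
p_0 = \frac{1}{\alpha^2}\left[\frac{1}{m^2}\,E\!\left[\tfrac{1}{K''+2}\right] + \frac{8}{m^2}\,E\!\left[\tfrac{1}{K''+1}\right]\right].
\]

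The computational engine for both expectations is the identity $\binom{N}{k}\big/(k+1)=\binom{N+1}{k+1}\big/(N+1)$, which turns $E[1/(K''+1)]$ into a truncated binomial sum and yields the closed form $\tfrac{m^2}{n-1}\big[1-\rho^{\,n-1}\big]$ with $\rho=(m^2-1)/m^2$. For $E[1/(K''+2)]$ I would use the second-order version, decomposing $1/(K''+2)=1/(K''+1)-1/\big((K''+1)(K''+2)\big)$ and applying $\binom{N}{k}\big/\big((k+1)(k+2)\big)=\binom{N+2}{k+2}\big/\big((N+1)(N+2)\big)$. Substituting the two closed forms and collecting the constant and $\rho^{\,n-1}$ terms produces exactly $\tfrac{9n-m^2}{n(n-1)}-\rho^{\,n-1}\tfrac{8n+1-m^2}{n(n-1)}$, i.e.\ the claimed $p_0$. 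For $p_1$ only the out-of-range case contributes; multiplying its probability $(m^2-9)/m^2$ by the dispatch probability $q$ and by the access factor $E[1/(K''+1)]$ gives the stated expression immediately, and $p_2=1-p_0-p_1$ is the residual event in which $S$ neither reaches $D$ nor dispatches.

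The main obstacle is the correlation introduced by $D$ possibly occupying $S$'s own cell: it forces the split into $\mathrm{Binomial}(n-2,\cdot)$ contenders and, in particular, the evaluation of $E[1/(K''+2)]$, whose closed form is the one genuinely non-routine step. Once the second-order binomial identity is in hand the remaining work is algebraic bookkeeping, and the factor $1/\alpha^2$ and the dispatch probability $q$ enter only multiplicatively.
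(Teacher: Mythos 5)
Your proposal is correct and takes essentially the same approach as the paper: the paper's (very terse) appendix proof uses exactly your decomposition---$S$ lands in an active cell (probability $1/\alpha^2$), wins the fair intra-cell contention, and then the transmission type is dictated by $D$'s location under the PD-$f$ scheme---while deferring the actual computation to cited prior work. Your explicit binomial-expectation calculations, including the careful $E[1/(K''+2)]$ treatment when $D$ occupies $S$'s own cell, do reproduce the stated closed forms for $p_0$ and $p_1$, so you have in effect supplied the details the paper omits.
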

\begin{proof}
The proof is given in Appendix~\ref{Appendix:basic_p}.
\end{proof}

\section{Source Delay Analysis} \label{section:delay}
Based on the QBD-based theoretical framework developed above, this section conducts analysis on the source delay defined as follow. 

\begin{definition}
In a PD-$f$ MANET, the source delay $U$ of a packet is defined as the time the packet experiences in its local-queue after it is inserted into the local-queue.
\end{definition}

To analyze the source delay, we first examine the steady state distribution of the local-queue, based on which we then derive the CDF and mean/variance of the source delay.

% upon $Z$ being inserted, and then derive the distribution of source delay.

\subsection{State Distribution of Local-Queue}
We adopt a row vector $\boldsymbol{\pi}^{*}_{\omega}=[\pi^{*}_{\omega,0} \,\, \boldsymbol{\pi}^{*}_{\omega,1} \cdots \boldsymbol{\pi}^{*}_{\omega,M}]$ of size $1+M \cdot f$ to denote the steady state distribution of the local-queue, here  $\pi^{*}_{\omega,0}$ is a scalar value representing the probability that the local-queue is in the state $(0,0)$, while  $\boldsymbol{\pi}^{*}_{\omega,l}=({\pi}^{*}_{\omega,l,j})_{1 \times f}$  is a sub-vector with ${\pi}^{*}_{\omega,l,j}$ being the probability that the local queue is in state $(l,j)$, $1 \leq l \leq M, 0 \leq j \leq f-1$. 

For the analysis of source delay, we further define a row vector $\boldsymbol{\pi}^{*}_{\Omega}=[\pi^{*}_{\Omega,0} \,\, \boldsymbol{\pi}^{*}_{\Omega,1} \cdots \boldsymbol{\pi}^{*}_{\Omega,M}]$ of size $1+M \cdot f$ to denote the conditional steady state distribution of the local-queue under the condition that a new packet has just been inserted into the local-queue, here $\pi^{*}_{\Omega,0}$ is a scalar value representing the probability that the local-queue is in the state $(0,0)$ under the above condition, while $\boldsymbol{\pi}^{*}_{\Omega,l}=({\pi}^{*}_{\Omega,l,j})_{1 \times f}$  is a sub-vector with ${\pi}^{*}_{\Omega,l,j}$ being the probability that the local queue is in state $(l,j)$ under the above condition, $1 \leq l \leq M, 0 \leq j \leq f-1$. 
Regarding the evaluation of $\boldsymbol{\pi}^{*}_{\Omega}$, we have the following lemma.

\begin{lemma}	\label{lemma:pi0}
In a PD-$f$ MANET, its conditional steady local-queue state distribution $\boldsymbol{\pi}^{*}_{\Omega}$ is given by 
\begin{align}
\boldsymbol{\pi}^{*}_{\Omega} &= \frac{\boldsymbol{\pi}^{*}_{\omega}\mathbf{P_2}}{\lambda\boldsymbol{\pi}^{*}_{\omega} \mathbf{P_1} \mathbf{1}}	\label{eq:pi},
\end{align}
where $\mathbf{1}$ is a column vector with all elements being $1$.
The matrix $\mathbf{P_1}$ in (\ref{eq:pi}) is determined based on (\ref{eq:QBD_matrix}) by setting the corresponding sub-matrices as follows:

For $M=1$,
\begin{align}
\mathbf{B_0} &= \mathbf{0},  \label{eq:P11-B0} \\
\mathbf{B_1} &= [1],  \label{eq:P11-B1}\\
\mathbf{B_2} &= \mathbf{c},  \label{eq:P11-B2} \\
\mathbf{A_M} &= \mathbf{0}.  \label{eq:P11-AM}
\end{align}

For $M \geq 2$,
\begin{align}
\mathbf{B_0} &= \mathbf{0},  \label{eq:P1-B0}\\
\mathbf{B_1} &= [1],  \label{eq:P1-B1}\\
\mathbf{B_2} &= \mathbf{c},  \label{eq:P1-B2}\\
\mathbf{A_0} &= \mathbf{0},  \label{eq:P1-A0}\\
\mathbf{A_1} &= \mathbf{Q},  \label{eq:P1-A1}\\
\mathbf{A_2} &= \mathbf{c} \cdot \mathbf{r},  \label{eq:P1-A2}\\
\mathbf{A_M} &= \mathbf{0}.  \label{eq:P1-AM}
\end{align}
where $\mathbf{0}$ is a matrix of proper size with all elements being $0$,
\begin{align}
\mathbf{c} = & [p_{0} \quad \cdots \quad p_{0} \quad p_{0}+p_{1}]^T,
\label{eq:matix_c} \\
\mathbf{r} = & [1 \quad 0 \quad \cdots \quad 0],	\label{eq:matix_r} \\
\bf{Q}=			& 	
\left[
					\begin{array}{ccccc}
									p_2 &	p_{1}	&		&		&	 \\ 
									  &	p_2				&	p_{1}	&		&	 \\
										&						&	\ddots	&	\ddots	&	 \\
										&						&					&	p_2			&	p_{1} \\
									 	&						&					&					&	p_2 \\
					\end{array}
\right]. \label{eq:matix_Q}
\end{align}

The matrix $\mathbf{P_2}$ in (\ref{eq:pi}) is also determined based on (\ref{eq:QBD_matrix}) by setting the corresponding sub-matrices as follows:

For $M=1$,
\begin{align}
\mathbf{B_0} &= [\lambda \quad 0 \quad \cdots \quad 0],  \label{eq:P21-B0}\\
\mathbf{B_1} &= [0],  \label{eq:P21-B1}\\
\mathbf{B_2} &= \mathbf{0},  \label{eq:P21-B2}\\
\mathbf{A_M} &= \lambda \mathbf{c}\cdot\mathbf{r}.  \label{eq:P21-AM}
\end{align}

For $M \geq 2$,
\begin{align}
\mathbf{B_0} &= [\lambda \quad 0 \quad \cdots \quad 0],  \label{eq:P2-B0}\\
\mathbf{B_1} &= [0],  \label{eq:P2-B1}\\
\mathbf{B_2} &= \mathbf{0},  \label{eq:P2-B2}\\
\mathbf{A_0} &= \lambda \mathbf{Q},  \label{eq:P2-A0}\\
\mathbf{A_1} &= \lambda \mathbf{c}\cdot\mathbf{r},  \label{eq:P2-A1}\\
\mathbf{A_2} &= \mathbf{0},  \label{eq:P2-A2}\\
\mathbf{A_M} &= \lambda \mathbf{c}\cdot\mathbf{r}.  \label{eq:P2-AM}
\end{align}
\end{lemma}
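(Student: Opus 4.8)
The plan is to interpret $\boldsymbol{\pi}^{*}_{\Omega}$ as the state distribution seen by an arriving (successfully inserted) packet, and to obtain it from the time-stationary distribution $\boldsymbol{\pi}^{*}_{\omega}$ by a discrete-time conditioning (Palm) argument. First I would decompose the one-slot evolution of the local-queue into its two constituent actions: the transmission (service) action, which may remove the head-of-line packet, and the packet-generation action, which appends a freshly generated packet whenever the buffer can accommodate it. Reading the PD-$f$ scheme in Algorithm~\ref{algorithm:PD-f} together with the transitions of Fig.~\ref{fig:state_transition}, the service action keeps the head-of-line packet in place with the probabilities collected in $\mathbf{Q}$ (an idle or failed slot fixes $j$, a non-terminal dispatch advances $j$ by one) and removes it with the departure probabilities collected in $\mathbf{c}$ (a source-destination transmission in any dispatch state, or the $f$-th dispatch when $j=f-1$), after which the new head-of-line packet restarts at $j'=0$ as encoded by $\mathbf{r}$. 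I would then identify $\mathbf{P_2}$ as exactly the \emph{packet-inserted} part of the one-slot transition: an insertion occurs precisely when a packet is generated, with probability $\lambda$, and the post-service queue has a free slot.

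The key step is to verify the block form of $\mathbf{P_2}$ in (\ref{eq:P21-B0})--(\ref{eq:P21-AM}) and (\ref{eq:P2-B0})--(\ref{eq:P2-AM}) level by level. Below capacity the buffer always has room, so insertion happens with probability $\lambda$ independently of the service outcome and raises the post-service level by one: from $(0,0)$ this is $\mathbf{B_0}=[\lambda\ 0\ \cdots\ 0]$; from an interior level it is $\mathbf{A_0}=\lambda\mathbf{Q}$ (service keeps the level, insertion lifts it by one) together with $\mathbf{A_1}=\lambda\mathbf{c}\cdot\mathbf{r}$ (service departs and the insertion restores the level). At the full level $l=M$ a generated packet can only be admitted if the service action has simultaneously freed a slot, so admission is coupled to a departure and contributes only $\mathbf{A_M}=\lambda\mathbf{c}\cdot\mathbf{r}$; the $M=1$ case is the same statement restricted to the states $(0,0)$ and $(1,j)$. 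This boundary coupling at $l=M$ is the delicate point of the construction, and it is where the roles of the two matrices diverge.

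Next I would make precise what $\mathbf{P_1}$ contributes. By inspection of (\ref{eq:P11-B0})--(\ref{eq:P1-AM}), its row sums $\mathbf{P_1}\mathbf{1}$ equal $1$ at every state with $l<M$ and equal the entries of $\mathbf{c}$ at the states $(M,j)$; that is, $\mathbf{P_1}\mathbf{1}$ is exactly the per-state admission probability (room is certain below capacity, and equals the departure probability $\mathbf{c}(j)$ at capacity). Comparing this with the row sums of $\mathbf{P_2}$ obtained above yields the identity $\mathbf{P_2}\mathbf{1}=\lambda\,\mathbf{P_1}\mathbf{1}$ as vectors, which I would check block by block. Consequently the total per-slot insertion probability in steady state is $\boldsymbol{\pi}^{*}_{\omega}\mathbf{P_2}\mathbf{1}=\lambda\,\boldsymbol{\pi}^{*}_{\omega}\mathbf{P_1}\mathbf{1}$, i.e. the effective arrival rate after buffer-overflow losses are discounted.

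Finally I would invoke the conditioning argument. Since $\{\mathbf{X}(t)\}$ is an ergodic QBD chain with stationary law $\boldsymbol{\pi}^{*}_{\omega}$, the long-run fraction of slots in which a packet is inserted and the queue is left in state $s'$ equals $(\boldsymbol{\pi}^{*}_{\omega}\mathbf{P_2})(s')$, while the long-run fraction of slots carrying any insertion equals $\boldsymbol{\pi}^{*}_{\omega}\mathbf{P_2}\mathbf{1}$. Dividing the former by the latter gives the state distribution immediately after a successful insertion, namely the distribution seen by the tagged arriving packet,
\begin{align*}
\boldsymbol{\pi}^{*}_{\Omega}=\frac{\boldsymbol{\pi}^{*}_{\omega}\mathbf{P_2}}{\boldsymbol{\pi}^{*}_{\omega}\mathbf{P_2}\mathbf{1}},
\end{align*}
and substituting $\boldsymbol{\pi}^{*}_{\omega}\mathbf{P_2}\mathbf{1}=\lambda\,\boldsymbol{\pi}^{*}_{\omega}\mathbf{P_1}\mathbf{1}$ produces (\ref{eq:pi}). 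The main obstacle I anticipate is not the normalization algebra but the two modeling points beneath it: establishing the discrete-time arriving-customer (Palm) identity rigorously from the ergodic theorem, and justifying the service-before-insertion ordering at the full level $M$ so that the admission probability, and hence the block $\mathbf{A_M}$ of $\mathbf{P_2}$, is correctly tied to a concurrent departure.
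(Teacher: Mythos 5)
Your proposal is correct and follows essentially the same route as the paper's own proof: the same service-then-insertion decomposition of a slot, the same identification of $\mathbf{P_1}$ (service transitions landing in non-full states, so that $\mathbf{P_1}\mathbf{1}$ is the per-state admission probability vector) and of $\mathbf{P_2}$ (joint insertion-and-transition probabilities), and the same Bayes conditioning on the insertion event. The only differences are cosmetic: the paper computes the transient conditional distribution $\boldsymbol{\pi}_{\Omega}(t+1)$ and lets $t\to\infty$, obtaining the denominator as $\lambda\,Pr\{I_4(t)=1\}=\lambda\boldsymbol{\pi}_{\omega}(t)\mathbf{P_1}\mathbf{1}$ directly from the independence of the Bernoulli generation process, whereas you argue in steady state via long-run fractions and recover the same denominator through the row-sum identity $\mathbf{P_2}\mathbf{1}=\lambda\,\mathbf{P_1}\mathbf{1}$, which indeed holds block by block.
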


\begin{proof}
See Appendix~\ref{appendix:pi0} for the proof.
\end{proof}

The result in (\ref{eq:pi}) indicates that for the evaluation of $\boldsymbol{\pi}^{*}_{\Omega}$, we still need to determine the steady state distribution $\boldsymbol{\pi}^{*}_{\omega}$ of the local-queue.  

\begin{lemma} \label{lemma:pi_omega}
In a PD-$f$ MANET, its steady state distribution $\boldsymbol{\pi}^{*}_{\omega}$ of the local-queue is determined as follows:

For $M=1$,
\begin{align}
\pi^{*}_{\omega,0} &= \pi^{*}_{\omega,0}\mathbf{B_1} + \boldsymbol{\pi}^{*}_{\omega,1}\mathbf{B_2}, \label{eq:piM11} \\
\boldsymbol{\pi}^{*}_{\omega,1} &= \pi^{*}_{\omega,0}\mathbf{B_0} + \boldsymbol{\pi}^{*}_{\omega,1}\mathbf{A_M}, \label{eq:piM12} \\
\boldsymbol{\pi}^{*}_{\omega}\cdot \mathbf{1} &=1. \label{eq:piM13} 
\end{align}

For $M=2$,
\begin{align}
\pi^{*}_{\omega,0} &= \pi^{*}_{\omega,0}\mathbf{B_1} + \boldsymbol{\pi}^{*}_{\omega,1}\mathbf{B_2}, \label{eq:piM21} \\
\boldsymbol{\pi}^{*}_{\omega,1} &= \pi^{*}_{\omega,0}\mathbf{B_0} + \boldsymbol{\pi}^{*}_{\omega,1}\mathbf{A_1}+ \boldsymbol{\pi}^{*}_{\omega,2}\mathbf{A_2}, \label{eq:piM22} \\
\boldsymbol{\pi}^{*}_{\omega,2} &= \boldsymbol{\pi}^{*}_{\omega,1}\mathbf{A_0}+ \boldsymbol{\pi}^{*}_{\omega,2}\mathbf{A_M}, \label{eq:piM23} \\
\boldsymbol{\pi}^{*}_{\omega}\cdot \mathbf{1} &=1. \label{eq:piM24} 
\end{align}

For $M \geq 3$,
\begin{align}
[\pi^{*}_{\omega,0}, \boldsymbol{\pi}^{*}_{\omega,1}]	&=	[\pi^{*}_{\omega,0}, \boldsymbol{\pi}^{*}_{\omega,1}]
																						\left[
																									\begin{array}{cc}
																									\mathbf{B_1}	&	\mathbf{B_0}	\\
																									\mathbf{B_2}	&	\mathbf{A_1}+\mathbf{R}\mathbf{A_2}
																									\end{array}
																						\right],		\label{eq:pi'_1} \\
\boldsymbol{\pi}^{*}_{\omega,i} &= \boldsymbol{\pi}^{*}_{\omega,1}\mathbf{R}^{i-1}, \quad 2 \leq i \leq M-1, \label{eq:pi'_2} \\
\boldsymbol{\pi}^{*}_{\omega,M} &= \boldsymbol{\pi}^{*}_{\omega,1}\mathbf{R}^{M-2}\mathbf{R_M},	\label{eq:pi'_3} \\
\boldsymbol{\pi}^{*}_{\omega}\cdot \mathbf{1} &=1 \label{eq:pi'_4}, 
\end{align}
where 
\begin{align}
\mathbf{B_0} &= [\lambda \quad 0 \quad \cdots \quad 0],  \label{eq:P3-B0} \\
\mathbf{B_1} &= [1-\lambda],  \label{eq:P3-B1} \\
\mathbf{B_2} &= (1-\lambda)\mathbf{c},  \label{eq:P3-B2} \\
\mathbf{A_0} &= \lambda \mathbf{Q},  \label{eq:P3-A0} \\
\mathbf{A_1} &= (1-\lambda)\mathbf{Q}+ \lambda \mathbf{c}\cdot\mathbf{r},  \label{eq:P3-A1} \\
\mathbf{A_2} &= (1-\lambda) \mathbf{c}\cdot\mathbf{r},  \label{eq:P3-A2} \\
\mathbf{A_M} &= \mathbf{A_1}+\mathbf{A_0},	 \label{eq:P3-AM} \\
\mathbf{R} &=\mathbf{A_0}[\mathbf{I}-\mathbf{A_1}-\mathbf{A_0}\cdot\mathbf{1}\cdot \mathbf{r}]^{-1}, \\
\mathbf{R_M} &=\mathbf{A_0}[\mathbf{I}-\mathbf{A_M}]^{-1},
\end{align}
here $\mathbf{c}$, $\mathbf{r}$ and $\mathbf{Q}$ are given in (\ref{eq:matix_c}), (\ref{eq:matix_r}) and (\ref{eq:matix_Q}), respectively; $\mathbf{I}$ is an identity matrix of size $f \times f$, and $\mathbf{1}$ is a column vector of proper size with all elements being $1$.
\end{lemma}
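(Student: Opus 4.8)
The plan is to characterize $\boldsymbol{\pi}^{*}_{\omega}$ as the unique stationary vector of the stochastic matrix $\mathbf{P}$ in \eqref{eq:QBD_matrix}, i.e.\ the solution of $\boldsymbol{\pi}^{*}_{\omega}=\boldsymbol{\pi}^{*}_{\omega}\mathbf{P}$ subject to the normalization $\boldsymbol{\pi}^{*}_{\omega}\cdot\mathbf{1}=1$, and then to exploit the block-tridiagonal (QBD) structure of $\mathbf{P}$ to collapse these balance equations into the stated forms. First I would read off, level by level, the blocks of $\boldsymbol{\pi}^{*}_{\omega}\mathbf{P}=\boldsymbol{\pi}^{*}_{\omega}$: the level-$0$ equation couples $\pi^{*}_{\omega,0}$ and $\boldsymbol{\pi}^{*}_{\omega,1}$ through $\mathbf{B_1},\mathbf{B_2}$; the level-$1$ equation couples $\pi^{*}_{\omega,0},\boldsymbol{\pi}^{*}_{\omega,1},\boldsymbol{\pi}^{*}_{\omega,2}$ through $\mathbf{B_0},\mathbf{A_1},\mathbf{A_2}$; each interior level $2\le i\le M-1$ gives $\boldsymbol{\pi}^{*}_{\omega,i}=\boldsymbol{\pi}^{*}_{\omega,i-1}\mathbf{A_0}+\boldsymbol{\pi}^{*}_{\omega,i}\mathbf{A_1}+\boldsymbol{\pi}^{*}_{\omega,i+1}\mathbf{A_2}$; and the top level gives $\boldsymbol{\pi}^{*}_{\omega,M}=\boldsymbol{\pi}^{*}_{\omega,M-1}\mathbf{A_0}+\boldsymbol{\pi}^{*}_{\omega,M}\mathbf{A_M}$. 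For $M=1$ and $M=2$ there are no interior levels (or a single one), so this stationarity system is already \emph{verbatim} \eqref{eq:piM11}--\eqref{eq:piM13} and \eqref{eq:piM21}--\eqref{eq:piM24}; nothing beyond identifying the blocks of $\mathbf{P}$ is required, and existence/uniqueness follows from irreducibility of the finite chain.

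For $M\ge 3$ I would invoke the matrix-geometric method. The key structural observation is that the down-transition block has rank one, $\mathbf{A_2}=(1-\lambda)\,\mathbf{c}\cdot\mathbf{r}$ with $\mathbf{r}=[1\ 0\ \cdots\ 0]$, so every passage to a lower level lands in phase $0$; since the chain is skip-free downward, the first-passage matrix of the repeating part is therefore $\mathbf{G}=\mathbf{1}\cdot\mathbf{r}$, and the associated rate matrix reduces to the closed form $\mathbf{R}=\mathbf{A_0}(\mathbf{I}-\mathbf{A_1}-\mathbf{A_0}\mathbf{G})^{-1}=\mathbf{A_0}[\mathbf{I}-\mathbf{A_1}-\mathbf{A_0}\cdot\mathbf{1}\cdot\mathbf{r}]^{-1}$, which is the stated $\mathbf{R}$. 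Using the quadratic $\mathbf{R}=\mathbf{A_0}+\mathbf{R}\mathbf{A_1}+\mathbf{R}^{2}\mathbf{A_2}$, a direct substitution shows the geometric ansatz $\boldsymbol{\pi}^{*}_{\omega,i}=\boldsymbol{\pi}^{*}_{\omega,1}\mathbf{R}^{i-1}$ satisfies every interior balance equation \eqref{eq:pi'_2}, while the top-level equation rearranges to $\boldsymbol{\pi}^{*}_{\omega,M}(\mathbf{I}-\mathbf{A_M})=\boldsymbol{\pi}^{*}_{\omega,M-1}\mathbf{A_0}$, giving \eqref{eq:pi'_3} with $\mathbf{R_M}=\mathbf{A_0}[\mathbf{I}-\mathbf{A_M}]^{-1}$. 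Feeding $\boldsymbol{\pi}^{*}_{\omega,2}=\boldsymbol{\pi}^{*}_{\omega,1}\mathbf{R}$ back into the level-$0$ and level-$1$ equations collapses them into the two-block boundary system \eqref{eq:pi'_1}, and \eqref{eq:pi'_4} is the normalization.

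The step I expect to be the main obstacle is reconciling the pure geometric tail with the \emph{modified} top boundary: because this is a finite QBD (levels $0$ through $M$), the level-$(M-1)$ equation contains $\boldsymbol{\pi}^{*}_{\omega,M}=\boldsymbol{\pi}^{*}_{\omega,M-1}\mathbf{R_M}$ rather than $\boldsymbol{\pi}^{*}_{\omega,M-1}\mathbf{R}$, and substituting the ansatz leaves the residual requirement $\mathbf{R}(\mathbf{R}-\mathbf{R_M})\mathbf{A_2}=\mathbf{0}$. Since $\mathbf{A_2}$ is rank one and $\mathbf{R}$ is injective (as $\mathbf{A_0}=\lambda\mathbf{Q}$ is invertible), this reduces to the single vector identity $\mathbf{R}\,\mathbf{c}=\mathbf{R_M}\,\mathbf{c}$. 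I would establish it from the conservation relation $\mathbf{Q}\mathbf{1}+\mathbf{c}=\mathbf{1}$ built into \eqref{eq:matix_c}, \eqref{eq:matix_Q} together with $p_0+p_1+p_2=1$ (which is exactly what makes $\mathbf{Q}+\mathbf{c}\cdot\mathbf{r}$ stochastic): it yields $(\mathbf{I}-\mathbf{Q})^{-1}\mathbf{c}=\mathbf{1}$, and a short computation in which the $\mathbf{c}\cdot\mathbf{r}$ terms cancel then gives both $(\mathbf{I}-\mathbf{A_1}-\mathbf{A_0}\cdot\mathbf{1}\cdot\mathbf{r})^{-1}\mathbf{c}=\tfrac{1}{1-\lambda}\mathbf{1}$ and $(\mathbf{I}-\mathbf{A_M})^{-1}\mathbf{c}=\tfrac{1}{1-\lambda}\mathbf{1}$, so that $\mathbf{R}\mathbf{c}=\mathbf{R_M}\mathbf{c}=\tfrac{\lambda}{1-\lambda}(\mathbf{1}-\mathbf{c})$ and the level-$(M-1)$ equation is satisfied. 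With this identity in hand the geometric solution is consistent at every level, and irreducibility of the finite chain guarantees that the balance system plus normalization determines $\boldsymbol{\pi}^{*}_{\omega}$ uniquely, exactly as claimed.
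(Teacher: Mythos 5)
Your proposal is correct and is essentially the paper's own route: characterize $\boldsymbol{\pi}^{*}_{\omega}$ as the unique stationary vector of the finite, irreducible, aperiodic QBD (so $\boldsymbol{\pi}^{*}_{\omega}=\boldsymbol{\pi}^{*}_{\omega}\mathbf{P_0}$ plus normalization, which for $M=1,2$ is verbatim the stated systems), and for $M\geq 3$ exploit the rank-one structure $\mathbf{A_2}=(1-\lambda)\mathbf{c}\cdot\mathbf{r}$ to obtain the matrix-geometric solution with $\mathbf{R}=\mathbf{A_0}[\mathbf{I}-\mathbf{A_1}-\mathbf{A_0}\cdot\mathbf{1}\cdot\mathbf{r}]^{-1}$. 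The only difference is one of detail rather than of route: the paper delegates the $M\geq 3$ formulas to the cited matrix-analytic literature, whereas you verify them explicitly --- in particular the top-boundary consistency $(\mathbf{R}-\mathbf{R_M})\mathbf{A_2}=\mathbf{0}$, which you correctly reduce via $\mathbf{Q}\mathbf{1}+\mathbf{c}=\mathbf{1}$ to $\mathbf{R}\mathbf{c}=\mathbf{R_M}\mathbf{c}=\tfrac{\lambda}{1-\lambda}(\mathbf{1}-\mathbf{c})$, a step the paper's citation glosses over.
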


\begin{proof}
See Appendix~\ref{appendix:pi_omega} for the proof.
\end{proof}

\subsection{CDF, Mean and Variance of Source Delay}	\label{subsection:source_delay}
Based on the conditional steady state distribution $\boldsymbol{\pi}^{*}_{\Omega}$ of the local-queue, we are now ready to derive the CDF as well as mean and variance of the source delay, as summarized in the following theorem.

\begin{theorem}	\label{theorem:pi'}
In a PD-$f$ MANET, the probability mass function $Pr\{U=u\}$, CDF $Pr\{U\leq u\}$, mean $\overline{U}$ and variance ${\sigma}^2_U$ of the source delay $U$ of a packet are given by
\begin{align}
Pr\{U=u\}	&=\boldsymbol{\pi}^{-}_{\Omega}\mathbf{T}^{u-1}\mathbf{c}^{+}, \quad u \geq 1, \label{eq:PMF} \\
Pr\{U\leq u\} &= 1 - \boldsymbol{\pi}^{-}_{\Omega}\mathbf{T}^{u}\mathbf{1}, \quad u \geq 0,\label{eq:CDF} \\
\overline{U} &= \boldsymbol{\pi}^{-}_{\Omega}(\mathbf{I}-\mathbf{T})^{-2}\mathbf{c}^{+}	\label{lemma:mean_1}, \\
{\sigma}^2_U &= \boldsymbol{\pi}^{-}_{\Omega}(\mathbf{I}+\mathbf{T})(\mathbf{I}-\mathbf{T})^{-3}\mathbf{c}^{+}-{\overline{U}}^2,	\label{lemma:variance}
\end{align}
where $\boldsymbol{\pi}^{-}_{\Omega}=[\boldsymbol{\pi}^{*}_{\Omega,1} \,\, \boldsymbol{\pi}^{*}_{\Omega,2} \cdots \boldsymbol{\pi}^{*}_{\Omega,M}]$ is a sub vector of $\boldsymbol{\pi}^{*}_{\Omega}$, $\mathbf{c}^{+}$ is a column vector of size $M \cdot f$ and $\mathbf{T}$ is a matrix of size $(M \cdot f) \times (M \cdot f)$ determined as follows:

For $M=1$, 
\begin{align}
\mathbf{c}^{+} &= \mathbf{c}, \label{eq:cplus_M1} \\
\mathbf{T} &= \mathbf{Q}. \label{eq:T_M1}
\end{align}

For $M \geq 2$,
\begin{align}
\mathbf{c}^{+} &= [\mathbf{c} \quad 0 \quad \cdots \quad 0]^{T}, \label{eq:cplus_M2} \\
\mathbf{T} &= 
\left[
					\begin{array}{ccccc}
									\mathbf{A_1} &	\mathbf{A_0}	&		&		&	 \\ 
									\bf{A_2} &	\bf{A_1}	&	\bf{A_0}	&		&	 \\
										&	\ddots	&	\ddots	&	\ddots	&	 \\
										&						&	\bf{A_2}	&	\bf{A_1}	&	\bf{A_0} \\
									 	&						&					&	\bf{A_2}	&	\bf{A_M} \\
					\end{array}
\right],  \label{eq:T_M2}
\end{align}
where 

\begin{align}
\mathbf{A_0} &= \mathbf{0},  \\
\mathbf{A_1} &= \mathbf{Q},  \\
\mathbf{A_2} &= \mathbf{c}\cdot\mathbf{r}, \\
\mathbf{A_M} &= \mathbf{Q},
\end{align}
here $\mathbf{c}$, $\mathbf{r}$ and $\mathbf{Q}$ are given in (\ref{eq:matix_c}), (\ref{eq:matix_r}) and (\ref{eq:matix_Q}), respectively, and $\mathbf{0}$ is a matrix of proper size with all elements being $0$.

%in 
%\begin{align}
%\mathbf{P_3}	&= 														\left[
																									%\begin{array}{cc}
																									%1	&	\mathbf{0}	\\
																									%\mathbf{c}^{+}	&	\mathbf{T}
																									%\end{array}
																						%\right]
%\end{align}
%where matrix $\mathbf{P_3}$ is given by replacing sub-matrices in (\ref{eq:QBD_matrix}) as follows
%
%For $M=1$
%\begin{align}
%\mathbf{B_0} &= \mathbf{0}  \label{eq:P41-B0}\\
%\mathbf{B_1} &= [1]  \label{eq:P41-B1}\\
%\mathbf{B_2} &= \mathbf{c}  \label{eq:P41-B2}\\
%\mathbf{A_M} &= \mathbf{Q} \label{eq:P41-AM}
%\end{align}
%
%For $M \geq 2$
%\begin{align}
%\mathbf{B_0} &= \mathbf{0}  \label{eq:P4-B0}\\
%\mathbf{B_1} &= [1]  \label{eq:P4-B1}\\
%\mathbf{B_2} &= \mathbf{c}  \label{eq:P4-B2}\\
%\mathbf{A_0} &= \mathbf{0}  \label{eq:P4-A0}\\
%\mathbf{A_1} &= \mathbf{Q}  \label{eq:P4-A1}\\
%\mathbf{A_2} &= \mathbf{c}\cdot\mathbf{r} \label{eq:P4-A2} \\
%\mathbf{A_M} &= \mathbf{Q} \label{eq:P4-AM}
%\end{align}

\end{theorem}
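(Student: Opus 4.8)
The plan is to recognize the source delay $U$ as the absorption time of a discrete-time \emph{absorbing} Markov chain and then to read off the probability mass function, CDF, mean and variance from the standard phase-type machinery. The starting observation is that the sojourn of a \emph{tagged} packet in the local-queue is governed only by the head-of-line (HoL) service dynamics and by the number of packets lying ahead of (and including) the tagged packet; packets that arrive \emph{behind} it never influence its departure epoch. Accordingly I would track the reduced state $(l,j)$, where $l$ is the number of packets from the current HoL up to and including the tagged packet and $j$ is the dispatch count of the current HoL. Immediately after the tagged packet is inserted, this reduced state is distributed exactly as the conditional steady-state vector $\boldsymbol{\pi}^{*}_{\Omega}$; since insertion leaves the queue non-empty, the $(0,0)$-component vanishes and the initial law is precisely the sub-vector $\boldsymbol{\pi}^{-}_{\Omega}$.

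Next I would identify the one-slot transition law of the reduced chain from the per-slot service probabilities $p_{0},p_{1},p_{2}$ of Lemma~\ref{lemma:basic_p}. From a transient state $(l,j)$ three things can happen: nothing (probability $p_{2}$, the reduced state is unchanged); a packet-dispatch transmission that is not the $f$-th (probability $p_{1}$, advancing $j\mapsto j+1$); or a HoL departure, which occurs with probability $p_{0}$ when $j<f-1$ and with probability $p_{0}+p_{1}$ when $j=f-1$, resetting the HoL dispatch count and decreasing $l$ by one. This yields precisely the within-level block $\mathbf{Q}$ and the down-level block $\mathbf{c}\cdot\mathbf{r}$ appearing in $\mathbf{T}$, with departure mass collected in the vector $\mathbf{c}$; because arrivals are irrelevant to the tagged packet, no up-level block survives and the full-buffer level behaves like any other, which is exactly why $\mathbf{A_0}=\mathbf{0}$ and $\mathbf{A_M}=\mathbf{A_1}=\mathbf{Q}$. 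When $l=1$ the tagged packet is itself the HoL, so a HoL departure is absorption; collecting that event gives the absorption vector $\mathbf{c}^{+}=[\mathbf{c}\ 0\ \cdots\ 0]^{T}$, while the remaining transient transitions form the sub-stochastic matrix $\mathbf{T}$. The case $M=1$ is simply the single-level specialization $\mathbf{T}=\mathbf{Q}$, $\mathbf{c}^{+}=\mathbf{c}$.

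With the chain identified, the distributional formulas follow from standard phase-type identities. The event $\{U=u\}$ means the reduced state stays transient for the first $u-1$ slots and absorption occurs on the $u$-th, so $Pr\{U=u\}=\boldsymbol{\pi}^{-}_{\Omega}\mathbf{T}^{u-1}\mathbf{c}^{+}$; summing per-state survival probabilities gives $Pr\{U>u\}=\boldsymbol{\pi}^{-}_{\Omega}\mathbf{T}^{u}\mathbf{1}$ and hence the stated CDF. For the moments I would invoke the matrix power series $\sum_{u\ge1}u\,\mathbf{T}^{u-1}=(\mathbf{I}-\mathbf{T})^{-2}$ and $\sum_{u\ge1}u^{2}\,\mathbf{T}^{u-1}=(\mathbf{I}+\mathbf{T})(\mathbf{I}-\mathbf{T})^{-3}$, apply them to $\overline{U}=\sum_{u}u\,Pr\{U=u\}$ and $E[U^{2}]=\sum_{u}u^{2}\,Pr\{U=u\}$, and finish with $\sigma^{2}_{U}=E[U^{2}]-\overline{U}^{2}$.

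I expect the main obstacle to be the first, modeling, step: rigorously justifying that behind-arrivals may be discarded and that the reduced $(l,j)$ process is genuinely Markov and time-homogeneous with the claimed blocks, including the correct bookkeeping of the HoL dispatch count across a departure and the coincidence $\mathbf{A_M}=\mathbf{A_1}$ at the full-buffer level. A secondary technical point is verifying that $\mathbf{T}$ is strictly sub-stochastic (equivalently, that absorption is certain), so that $\mathbf{I}-\mathbf{T}$ is invertible and the two matrix series converge; this follows since from every transient level a HoL departure has positive probability, guaranteeing eventual absorption.
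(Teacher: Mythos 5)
Your proposal is correct and follows essentially the same route as the paper: both construct an absorbing Markov chain for the tagged packet by discarding later arrivals, initialize it with $\boldsymbol{\pi}^{-}_{\Omega}$, read off the PMF and CDF from phase-type theory, and obtain the moments via the matrix series $(\mathbf{I}-\mathbf{T})^{-2}$ and $(\mathbf{I}+\mathbf{T})(\mathbf{I}-\mathbf{T})^{-3}$, justified by $\rho(\mathbf{T})<1$ (the paper derives this from $\mathbf{T}^{k}\rightarrow\mathbf{0}$ for absorbing chains, which is equivalent to your certain-absorption argument).
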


\begin{proof}
See Appendix~\ref{appendix:theorem:pi'} for the proof.
\end{proof}

%\begin{lemma}	\label{lemma:mean}
%The mean $\overline{U}$ and variance ${\sigma}^2_U$ of the source delay are given by 
%\begin{align}
%\overline{U} &= \boldsymbol{\pi}^{-}_{\Omega}(\mathbf{I}-\mathbf{T})^{-2}\mathbf{c}^{+}	\label{lemma:mean_1}, \\
%{\sigma}^2_U &= \boldsymbol{\pi}^{-}_{\Omega}(\mathbf{I}+\mathbf{T})(\mathbf{I}-\mathbf{T})^{-3}\mathbf{c}^{+}-{\overline{U}}^2.	\label{lemma:variance}
%\end{align}
%\end{lemma}
%
%\begin{proof}
%See Appendix~\ref{Appendix:mean} for the proof.
%\end{proof}

\section{Numerical Results} \label{section:numerical}
In this section, we first provide simulation results to validate the efficiency of our QBD-based theoretical framework and source delay models, and then illustrate how source delay in a PD-$f$ MANET is related to network parameters. 

%provide simulation results to validate the efficiency of our developed QBD-based theoretical framework and the derived source delay distribution, and then present theoretical results to illustrate the impact of dispatch scheme parameters on source delay performance in terms of its mean and standard deviation.

\subsection{Source Delay Validation}

To validate the theoretical framework and source delay models, a customized C++ simulator was developed to simulate the packet generating and dispatching processes in PD-$f$ MANETs \cite{SourceDelay}, in which network parameters, such as the number of network nodes $n$, network partition parameter $m$, local-queue buffer size $M$, packet dispatch limit $f$, packet dispatch probability $q$ and packet generating probability $\lambda$, can be flexibly adjusted to simulate source delay performance under various network scenarios. 
Based on the simulator, extensive simulations have been conducted to validate our our QBD-based source delay models. For three typical network scenarios of $n=100$ (small network), $n=200$ (medium network) and $n=400$ (large network) with $m = 8, M = 7, f = 2, q = 0.4$ and $\lambda = 0.001$, the corresponding simulation/theoretical results on the CDFs of source delay are summarized in Fig.~\ref{fig:CDF}.

We can see from Fig.~\ref{fig:CDF} that for all three network scenarios considered here, the theoretical results on the CDF of source delay match nicely with the corresponding simulated ones, indicating that our QBD-based theoretical framework is highly efficient in modeling the source delay behaviors of PD-$f$ MANETs. 
We can also see from Fig.~\ref{fig:CDF} that the source delay in a small network (e.g. $n=100$ here) is very likely smaller than that of a large network (e.g. $n=200$ or $n=400$ here). This is because that for a given network area and a fixed partition parameter $m$, as network size in terms of $n$ decreases the channel contention becomes less severe and thus each source node has more chances to conduct packet dispatch, leading to a shorter source delay one packet experiences in its source node.

%Second, Fig.~\ref{fig:CDF} shows that the source delay in Sce.~$3$ and Sce.~$1$ is very likely less than that in Sce.~$2$, which just corresponds to the following intuitive understanding: Sce.~$3$ has smaller packet dispatch limit $f$ than Sce.~$2$, implying that the source nodes in Sce.~$3$ could finish the whole packet dispatching process of a packet more quickly than those in Sce.~$2$, thus resulting in less source delay; and Sce.~$1$ has lower node density $n/m^2$ (thus less intensive wireless channel contentions) than Sce.~$2$, implying that the source nodes in Sce.~$1$ have greater chance to conduct packet dispatch and thus could also finish the whole packet dispatching process of a packet more quickly than those in Sce.~$2$, resulting in less source delay.

%\begin{table}[t]
	%\caption{Parameters of Network Scenarios}
	%\centering
	%\begin{tabular}{| c | c | c | c |}
	%\hline
	%Network parameter	& Sce.~1 & Sce.~2 & Sce.~3	\\
	%\hline
	%Number of nodes $n$ & 100 & 200 & 400 \\
	%\hline
	%Network partition $m$ & 8 & 8 & 8 \\
	%\hline
	%Local-queue size $M$ & 7 & 7 & 7 \\
	%\hline
	%Packet dispatch limit $f$ & 2 & 2 & 2 \\
	%\hline
	%Packet dispatch probability $q$ & 0.4 & 0.4 & 0.4 \\
	%\hline
	%Packet generation probability $\lambda$ & 0.001 & 0.001 &	0.001 \\
	%\hline
	%\end{tabular}
	%\label{table:parameter}
%\end{table}

\begin{figure}[!t]
\centering
\includegraphics[width=3.5in]{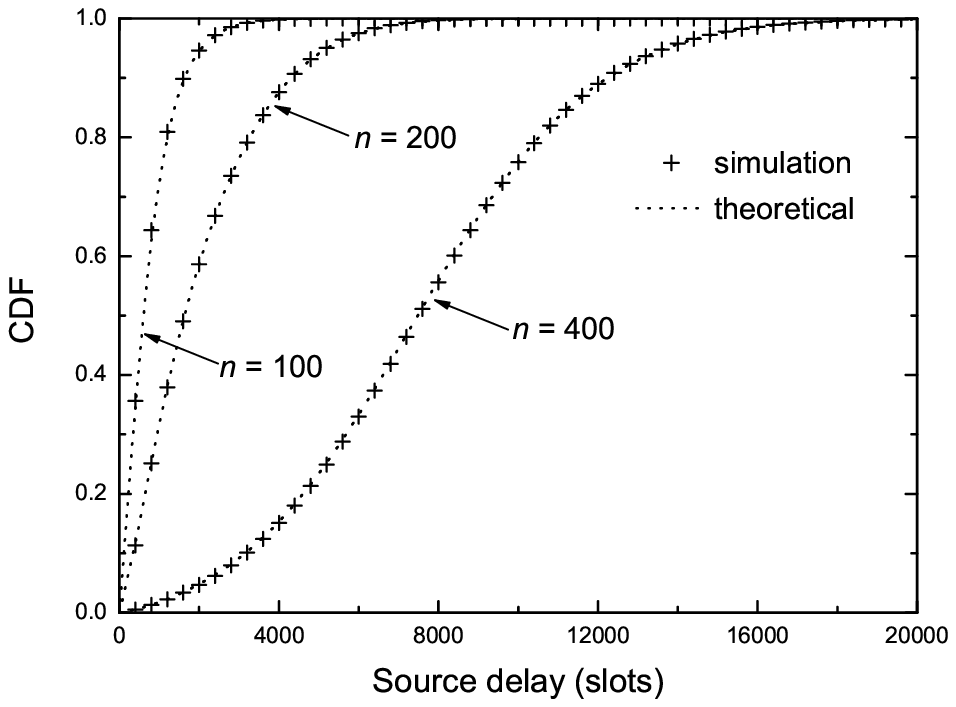}
\caption{The simulation and theoretical results on cumulative distribution function (CDF) of source delay.}
\label{fig:CDF}
\end{figure}

\subsection{Source Delay Illustrations}

With our QBD-based theoretical framework, we then illustrate how source delay performance, in terms of its mean $\overline{U}$ and standard deviation $\sigma_{U}=\sqrt{\sigma_{U}^{2}}$, is related to some main network parameters like packet generating probability $\lambda$, local-queue buffer size $M$, packet dispatch limit $f$ and packet dispatch probability $q$.

We first illustrate in Figs.~\ref{fig:meanvarbuffer} how $\overline{U}$ and $\sigma_{U}$ vary with $\lambda$ and $M$ for a network scenario of $n=200, m=16, q=0.6$ and $f=3$. 
We see from Fig.~\ref{fig:MeanxMyL} that for any given $M$, $\overline{U}$ first increases as $\lambda$ increases until $\lambda$ reaches some threshold value and then $\overline{U}$ remains almost a constant as $\lambda$  increases further beyond that threshold. On the other hand, for a given $\lambda \in[0.0005, 0.002]$, as $M$ increases $\overline{U}$ first increases and then remains constant, while for a given $\lambda \in[0.002, 0.01]$, $\overline{U}$ always increases as $M$ increases. 
Regarding the standard deviation $\sigma_{U}$ of source delay, we see from Fig.~\ref{fig:VarxMyL} that for given $M$, as $\lambda$ increases from 0.0005 to 0.01 $\sigma_{U}$ first increases sharply to a peak value, then decreases sharply, and finally converges to a  constant. It is interesting to see that the peak values of $\sigma_{U}$ under different settings of $M$ are all achieved at the same $\lambda=0.0025$. The results in Fig.~\ref{fig:VarxMyL} further indicate that for fixed $\lambda$, as $M$ increases $\sigma_{U}$ always first increases and then gradually converges to a constant.

\begin{figure}[!t]
    \centering
    {
    \subfloat[$\overline{U}$ versus $(\lambda, M)$]
    {\includegraphics[width=4in]{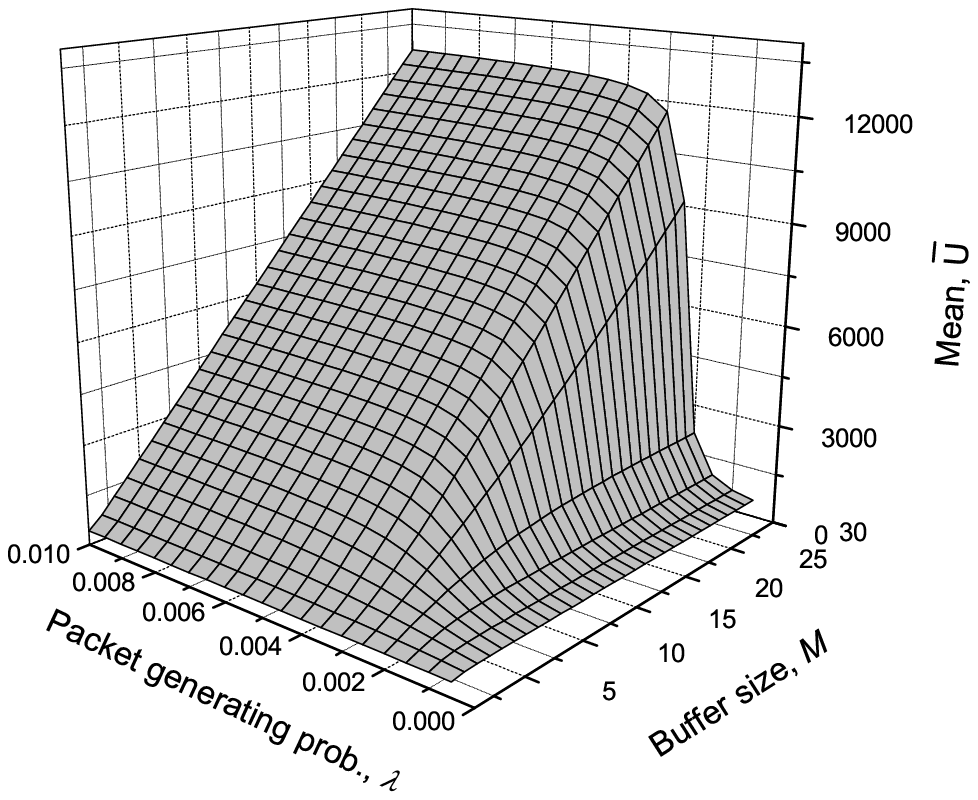} \label{fig:MeanxMyL}}
   	\hfill
    \subfloat[${\sigma}_U$ versus $(\lambda, M)$]
    {\includegraphics[width=4in]{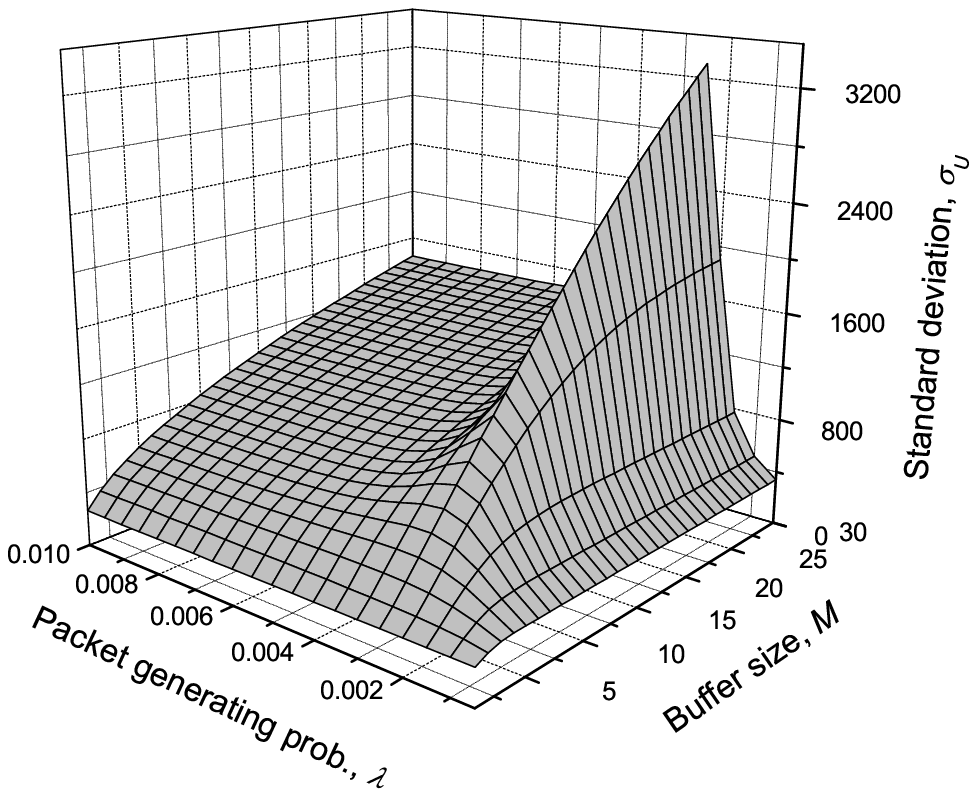} \label{fig:VarxMyL}}
    }
    \caption{Source delay performance versus  packet generating probability $\lambda$ and local-queue buffer size $M$.}
    \label{fig:meanvarbuffer}
\end{figure}

We then illustrate in Figs.~\ref{fig:meanvar} how $\overline{U}$ and $\sigma_{U}$ vary with packet dispatch parameters $q$ and $f$ under the network scenario of $n=300, m=16, M=7$ and $\lambda=0.002$. 
From Fig.~\ref{fig:mean} and Fig.~\ref{fig:var} we can see that although both $\overline{U}$ and $\sigma_{U}$ always decrease as $q$ increases for a fixed $f$, their variations with $q$ change dramatically with the setting of $f$. 
On the other hand, for a given $q \in[0.05, 0.2]$,  as $f$ increases both $\overline{U}$ and $\sigma_{U}$ first increase and then tend to a constant, while for a given $q \in [0.2,,1.0]$, both $\overline{U}$ and $\sigma_{U}$ always monotonically increase as $f$ increases.
%From Fig.~\ref{fig:var}, we see that the $\sigma_{U}$ variation with $q$ and $f$ is similar to that of $\overline{U}$, i.e., for given $f$ there, $\sigma_{U}$ decreases as $q$ increases, and for given $q$, $\sigma_{U}$ increases gradually as $f$ increases.

%This is because an increase in $q$ also increases the chance of a source node to do packet dispatch, which decreases $\overline{U}$. 
%We also see that $\overline{U}$ increases as $f$ increases for given $q$ there. 
%This is because an increase in $f$ increases the dispatching time of each packet in the local-queue, which increases $\overline{U}$.
%Thus, by setting $q$ and $f$ to moderate values (for example, $0.6 \leq q \leq 0.8, 1 \leq f \leq 5$), $\overline{U}$ could be controlled at a low level.
%A careful observation of Fig.~\ref{fig:mean} indicates that for little $q$ $(0 \leq q \leq 0.3)$, $\overline{U}$ first increases sharply as $f$ increases and then remains almost constant as $f$ further increases. 
%This suggests that small value of $q$ is not favorable to control source delay performance.
%As shown in Fig.\ref{fig:var}, $\sigma_{U}$ could be controlled in a similar way.

\begin{figure}[!t]
    \centering
    {
    \subfloat[$\overline{U}$ versus $(q, f)$]
    {\includegraphics[width=3.5in]{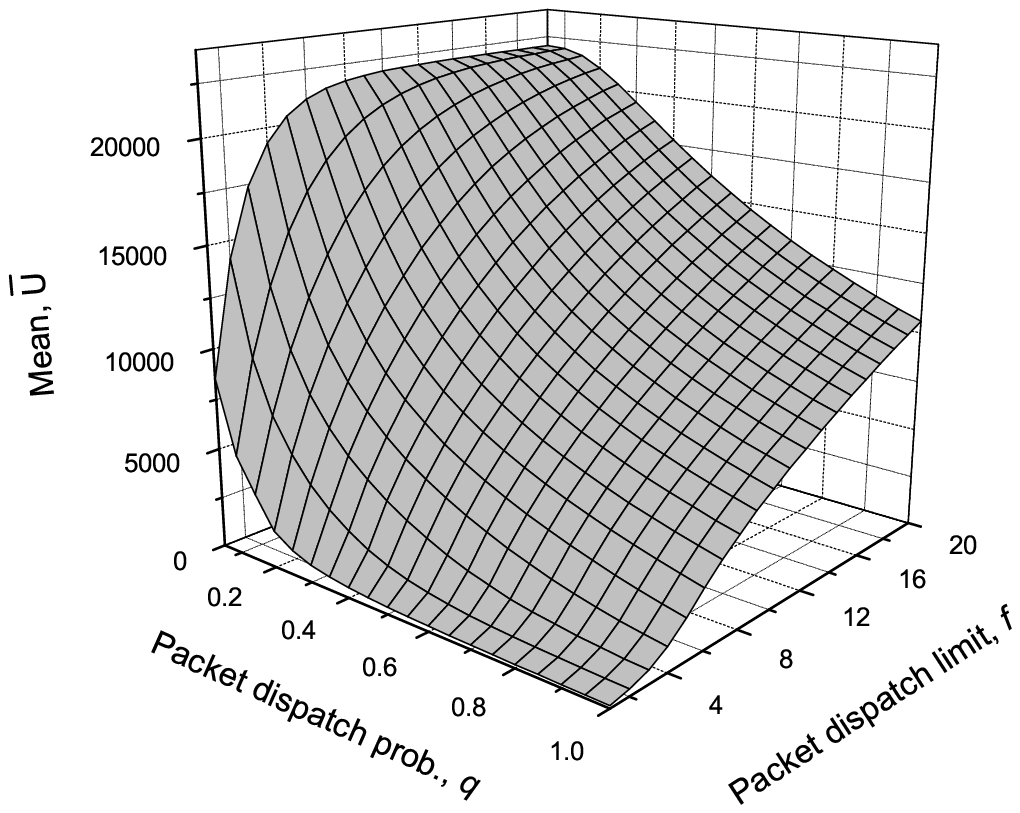} \label{fig:mean}}
   	\hfill
    \subfloat[$\sigma_{U}$ versus $(q, f)$]
    {\includegraphics[width=3.5in]{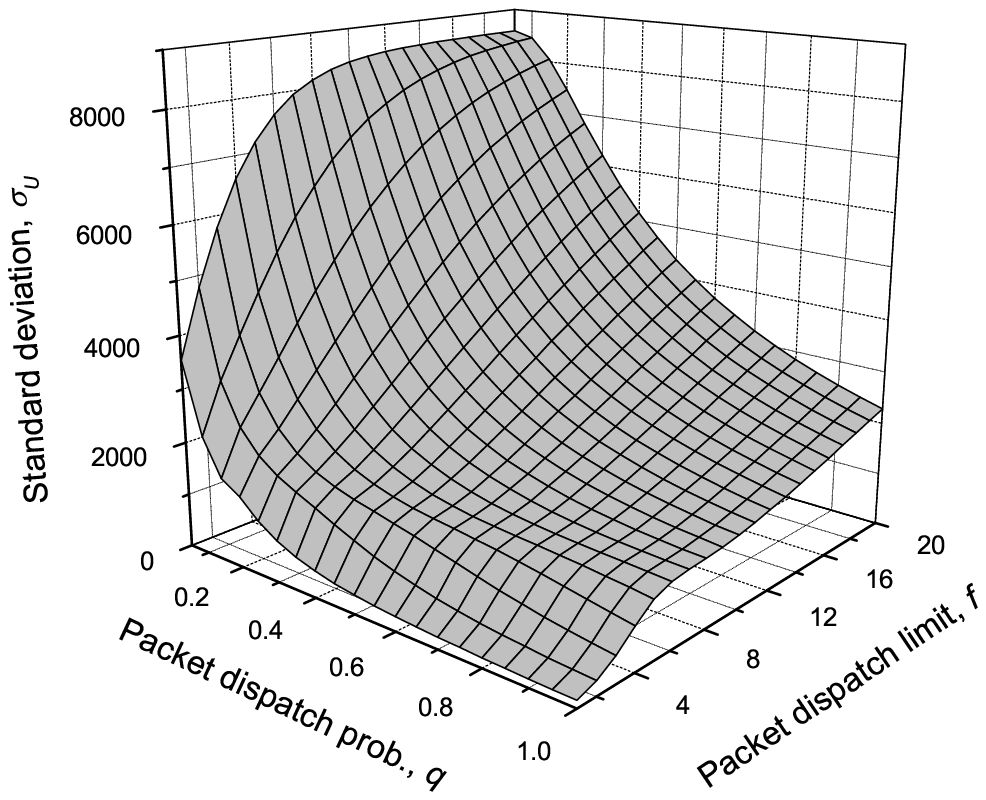} \label{fig:var} }
    }
    \caption{Source delay performance versus packet dispatch probability $q$ and packet dispatch limit $f$.}
    \label{fig:meanvar}
\end{figure}

\section{Related Works} \label{section:related_works}
A substantial amount of works have been devoted to the study of delay performance in MANETs, which can be roughly divided into partial delay study and overall delay study.

\subsection{Partial Delay Study}
The available works on partial delay study in MANETs mainly focus on the delivery delay analysis \cite{Hanbali_PE08,Small_WDTN05,Panagakis_WiOpt07,Ibrahim_PE07,Groenevelt_PE05,Hanbali_IFIP07,Liu_TON12,Spyropoulos_TON08} and local delay analysis \cite{Baccelli_INFOCOM10,Haenggi_IT13,Gong_TWC13}, which constitutes only a part of the overall packet delay. 

The delivery delay, defined as the time it takes a packet to reach its destination after its source starts to deliver it, has been extensively studied in the literature. 
For sparse MANETs without channel contentions, the Laplace-Stieltjes transform of delivery delay was studied in \cite{Groenevelt_PE05}; later, by imposing lifetime constraints on packets, the cumulative distribution function and $n$-th order moment of delivery delay were examined in \cite{Hanbali_PE08,Hanbali_IFIP07}; the delivery delay was also studied in \cite{Panagakis_WiOpt07,Small_WDTN05,Spyropoulos_TON08} under different assumptions on inter-meeting time among mobile nodes. For more general MANETs with channel contentions, closed-form results on mean and variance of delivery delay were recently reported in \cite{Liu_TON12}.
Regarding the local delay, i.e.  the time it takes a node to successfully transmit a packet to its next-hop receiver, it was reported in \cite{Baccelli_INFOCOM10} that some MANETs may suffer from a large and even infinite local delay. The work \cite{Haenggi_IT13} indicates that the power control serves as a very efficient approach to ensuring a finite local delay in MANETs. It was further reported in \cite{Gong_TWC13} that by properly exploiting node mobility in MANETs it is possible for us to reduce local delay there.

%Local delay, defined as the time slots it takes for a node to successfully transmit a packet to its next-hop receiver, was reported to have infinite mean time under some model parameters in MANETs []. To keep the mean time of local delay finite, power control has been shown to be essential []. Node mobility was also reported to be helpful to reduce local delay [].

\subsection{Overall Delay Analysis}
Overall delay (also called end-to-end delay), defined as the time it takes a packet to reach its destination after it is generated at its source, has also been extensively studied in the literature. 
For MANETs with two-hop relay routing, closed-form upper bounds on expected overall delay were derived in \cite{Neely_IT05,Liu_TWC11}. 
For MANETs with two-hop relay routing and its variants, approximation results on expected overall delay were presented in \cite{Liu_WCNC12,Liu_TWC12}.
For MANETs with multi-hop relay routing, upper bounds on the cumulative distribution function of overall delay were reported in \cite{Ciucu_Allerton10,Ciucu_SIGMETRICS11}, and approximations on the expected overall delay were derived in \cite{Jindal_TMC09}. Rather than studying upper bounds and approximations on overall delay, some recent works explored the exact overall delay and showed that it is possible to derive the exact results on overall delay for MANETs under some special two-hop relay routings \cite{Neely_IT05,Gao_WiOpt13}.
%For wireless sensor networks \cite{Wang_TON12}, exact results on the overall delay distribution were also reported in \cite{Wang_TON12}.

%For MANETs with two-hop relay routing and fixed packet redundancy, closed-form upper bounds on expected overall delay were derived in \cite{Neely_IT05,Liu_TWC11}. 
%For MANETs with slotted-Aloha MAC protocol, upper bounds on the cumulative distribution function of the overall delay were reported in \cite{Ciucu_Allerton10,Ciucu_SIGMETRICS11}. For MANETs with mobility-assisted routing and wireless contention, approximations on the expected overall delay were given in \cite{Jindal_TMC09}. 
%For MANETs with probing-based two-hop relay routing and with $f$-cast two-hop relay routing, approximations on expected overall delay were also given in \cite{Liu_WCNC12,Liu_TWC12}. 
%
%Rather than overall delay upper bounds and approximations, exact results on expected overall delay were derived for MANETs with two-hop relay routing without packet redundancy \cite{Neely_IT05} and for MANETs with broadcast-based two-hop relay routing with packet redundancy \cite{Gao_WiOpt13}. Exact results on the overall delay distribution were also examined for wireless sensor networks \cite{Wang_TON12}.

\section{Conclusion}	\label{section:conclusion}
This paper conducted a thorough study on the source delay in MANETs, a new and fundamental delay metric for such networks. A QBD-based theoretical framework was developed to model the source delay behaviors under a general packet dispatching scheme, based on which the cumulative distribution function as well as the mean and variance of source delay were derived. As validated through extensive simulation results that our QBD-based framework is highly efficient in modeling the source delay performance in MANETs. Numerical results were also provided to illustrate how source delay is related with and thus can be controlled by some key network parameters, like local-queue buffer size, packet dispatch limit, and packet dispatch probability. It is expected that our source delay analysis and the related QBD-based theoretical framework will solidly contribute to the study of overall delay behavior in MANETs.

% if have a single appendix:
%\appendix[Proof of the Zonklar Equations]
% or
%\appendix  % for no appendix heading
% do not use \section anymore after \appendix, only \section*
% is possibly needed

% use appendices with more than one appendix
% then use \section to start each appendix
% you must declare a \section before using any
% \subsection or using \label (\appendices by itself
% starts a section numbered zero.)
%

\appendices
\section{Proof of Lemma~\ref{lemma:basic_p}}	\label{Appendix:basic_p}
The proof process is similar to that in \cite{Liu_TWC11,Liu_TWC12}. We omit the proof details here and just outline the main idea of the proof. To derive the probability $p_{0}$ (resp. $p_{1}$), we first divide the event that $S$ conducts a  source-destination transmission (resp. packet-dispatch transmission) in a time slot into following sub-events: 1) $S$ moves into an active cell in the time slot according to the IID mobility model; 2) $S$ successfully accesses the wireless channel after fair contention according to the MAC-EC protocol; 3) $S$ selects to conduct source-destination transmission (resp. packet-dispatch transmission) according to the PD-$f$ scheme. We can then derive probability $p_{0}$ (resp. $p_{1}$) by combining the probabilities of these sub-events.

\section{Proof of Lemma~\ref{lemma:pi0}}	\label{appendix:pi0}

To derive the conditional steady state distribution $\boldsymbol{\pi}_{\Omega}^{*}$ of the local-queue under the condition that a packet has just been inserted into the queue, we first study its corresponding transient state distribution $\boldsymbol{\pi}_{\Omega}(t+1)$ at time slot $t+1$. 

Similar to the definition of $\boldsymbol{\pi}_{\Omega}^{*}$, we can see that the $(2+(l-1)f+j)$-th entry of row vector $\boldsymbol{\pi}_{\Omega}(t+1)$, denoted by $[\boldsymbol{\pi}_{\Omega}(t+1)]_{2+(l-1)f+j}$ here, corresponds to the probability that the local-queue is in state $\mathbf{X}(t+1)=(l, j)$ in time slot $t+1$ under the condition that a packet has just been inserted into the local-queue in time slot $t$, $1 \leq l \leq M, 0 \leq j \leq f-1$. The basic state transition from $\mathbf{X}(t)$ to $\mathbf{X}(t+1)$ is illustrated in  Fig.~\ref{fig:time_slot}, where $I_0(t)$ through $I_3(t)$ are indicator functions defined in Section III.A, and $I_4(t)$ is a new indicator function, taking value of $1$ if the local-queue is not full in time slot $t$ (i.e. the local-queue is in some state in $\big\{\{(0, 0)\}\cup \{(l ,j)\}; 1 \leq l \leq M-1, 0 \leq j \leq f-1\big\}$), and taking value of $0$ otherwise.

\begin{figure}[!t]
\centering
\includegraphics[width=3in]{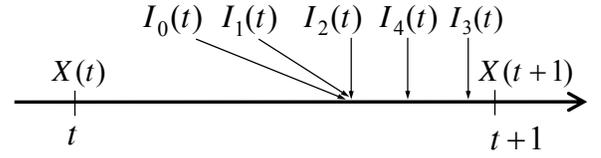}
\caption{Illustration for state transition from $\mathbf{X}(t)$ to $\mathbf{X}(t+1)$ during time slot $[t, t+1)$.}
\label{fig:time_slot}
\end{figure}

From Fig.~\ref{fig:time_slot} we can see that $[\boldsymbol{\pi}_{\Omega}(t+1)]_{2+(l-1)f+j}$ is evaluated as
\begin{align}
&[\boldsymbol{\pi}_{\Omega}(t+1)]_{2+(l-1)f+j}	\\
&=Pr\{\mathbf{X}(t+1)=(l, j) | I_4(t)=1, I_3(t)=1\}  \\
 &=\frac{Pr\{I_4(t)=1, I_3(t)=1, \mathbf{X}(t+1)=(l, j)\}}{Pr\{I_4(t)=1, I_3(t)=1\}} \\
&=\frac{Pr\{I_4(t)=1, I_3(t)=1, \mathbf{X}(t+1)=(l, j)\}}{\lambda \cdot Pr\{I_4(t)=1\}} \label{eq:equilibrium_2},
\end{align}
where (\ref{eq:equilibrium_2}) follows because the packet generating process is a Bernoulli process independent of the state of the local-queue.

For the probability $Pr\{I_4(t)=1\}$ in (\ref{eq:equilibrium_2}), we have 
\begin{align}
& Pr\{I_4(t)=1\} \\
&= \sum_{(l', j')}Pr\{I_4(t)=1, \mathbf{X}(t)=(l', j')\} \\
&= \sum_{(l', j')}Pr\{\mathbf{X}(t)=(l', j')\}Pr\{I_4(t)=1|\mathbf{X}(t)=(l', j')\}  	\label{eq:I(t2)}
\end{align}
where $Pr\{I_4(t)=1|\mathbf{X}(t)=(l', j')\}$ is actually the transition probability from state $\mathbf{X}(t)=(l', j')$ to states in $\big\{\{(0, 0)\}\cup \{(l ,j)\}; 1 \leq l \leq M-1, 0 \leq j \leq f-1\big\}$. The matrix $\mathbf{P_1}$ of such transition probabilities can be determined based on (\ref{eq:QBD_matrix}) by setting the corresponding sub-matrices according to (\ref{eq:P11-B0})-(\ref{eq:P1-AM}). With matrix $\mathbf{P_1}$ and (\ref{eq:I(t2)}), we have
\begin{align}
Pr\{I_4(t)=1\}	&= \boldsymbol{\pi}_{\omega}(t) \cdot \mathbf{P_1} \cdot \mathbf{1},	\label{eq:equilibrium_denominator}
\end{align}
where $\boldsymbol{\pi}_{\omega}(t)=(\boldsymbol{\pi}_{\omega,l,j}(t))_{1 \times M \cdot f}$ with $\boldsymbol{\pi}_{\omega,l,j}(t)$ being the probability $Pr\{\mathbf{X}(t)=(l', j')\}$.

For the numerator of (\ref{eq:equilibrium_2}), we have
\begin{align}
&Pr\{I_4(t)=1, I_3(t)=1, \mathbf{X}(t+1)=(l, j)\} \\
&=\!\!\!\!\sum_{\substack{(l', j')}}\!\!\!Pr\{\mathbf{X}(t)\!=\!(l'\!, j'), I_4(t)\!=\!1, I_3(t)\!=\!1, \mathbf{X}(t\!+\!1)\!=\!(l, j)\} \\
&=\sum_{\substack{(l', j')}}\!Pr\{\mathbf{X}(t)\!=\!(l', j')\}	\nonumber \\
&\quad \cdot Pr\{I_4(t)\!=\!1, I_3(t)\!=\!1, \mathbf{X}(t\!+\!1)\!=\!(l, j) | \mathbf{X}(t)\!=\!(l', j')\},  \label{eq:X(t+1)}
\end{align}
where $Pr\{I_4(t)\!=\!1, I_3(t)\!=\!1, \mathbf{X}(t\!+\!1)\!=\!(l, j)| \mathbf{X}(t)\!=\!(l', j')\}$ represents the transition probability from state $\mathbf{X}(t)\!=\!(l', j')$ to state $\mathbf{X}(t\!+\!1)\!=\!(l, j)$, with the condition that events $\{I_4(t)=1\}$ and $\{I_3(t)=1\}$ also happen simultaneously. The matrix $\mathbf{P_2}$ of such transition probabilities is determined based on (\ref{eq:QBD_matrix}) by setting the corresponding sub-matrices according to (\ref{eq:P21-B0})-(\ref{eq:P2-AM}).
With matrix $\mathbf{P_2}$ and (\ref{eq:X(t+1)}), we have
\begin{align}
& Pr\{I_4(t)=1, I_3(t)=1, \mathbf{X}(t+1)=(l, j)\} \nonumber \\
&= [\boldsymbol{\pi}_{\omega}(t)\mathbf{P_2}]_{2+(l-1)f+j}. \label{eq:equilibrium_numerator}
\end{align}

After substituting (\ref{eq:equilibrium_denominator}) and (\ref{eq:equilibrium_numerator}) into (\ref{eq:equilibrium_2}), we get
\begin{align}
 &[\boldsymbol{\pi}_{\Omega}(t+1)]_{2+(l-1)f+j} \nonumber \\
 &=\frac{[\boldsymbol{\pi}_{\omega}(t)\mathbf{P_2}]_{2+(l-1)f+j}}{\lambda\boldsymbol{\pi}_{\omega}(t) \mathbf{P_1} \mathbf{1}}.
\end{align}
Thus, in vector form
\begin{align}
\boldsymbol{\pi}_{\Omega}(t+1) =\frac{\boldsymbol{\pi}_{\omega}(t)\mathbf{P_2}}{\lambda\boldsymbol{\pi}_{\omega}(t) \mathbf{P_1} \mathbf{1}}. \label{eq:pi(t+1)}
\end{align}

Taking limits on both sides of (\ref{eq:pi(t+1)}), we get the steady state distribution $\boldsymbol{\pi}_{\Omega}^{*}$ as 
\begin{align}
\boldsymbol{\pi}_{\Omega}^{*} &= \lim_{t \rightarrow \infty}\boldsymbol{\pi}_{\Omega}(t+1) \\
	&=\lim_{t \rightarrow \infty} \frac{\boldsymbol{\pi}_{\omega}(t)\mathbf{P_2}}{\lambda\boldsymbol{\pi}_{\omega}(t) \mathbf{P_1} \mathbf{1}}	\label{eq:pi_limit} \\
	&= \frac{\boldsymbol{\pi}^{*}_{\omega}\mathbf{P_2}}{\lambda\boldsymbol{\pi}^{*}_{\omega} \mathbf{P_1} \mathbf{1}},
\end{align}
where 
\begin{align}
\boldsymbol{\pi}_{\omega}^{*} = \lim_{t \rightarrow \infty} \boldsymbol{\pi}_{\omega}(t). \label{eq:pi'define}
\end{align}

This completes the proof of Lemma~\ref{lemma:pi0}.

\section{Proof of Lemma~\ref{lemma:pi_omega}}	\label{appendix:pi_omega}

%We first prove that the local-queue do have a steady state distribution $\boldsymbol{\pi}_{\omega}^{*}$.

Recall that as time evolves, the state transitions of the local-queue form a QBD process shown in Fig.~\ref{fig:QBD}.
From Fig.~\ref{fig:QBD}, we can see that the QBD process has finite states and all states communicate with other states, so the 
Markov chain is recurrent. We also see from Fig.~\ref{fig:QBD} that every state could transition to itself, indicating that the Markov chain is aperiodic. Thus, the concerned QBD process is an ergodic Markov chain and has a unique limit state distribution $\boldsymbol{\pi}_{\omega}^{*}$ defined in (\ref{eq:pi'define}).

%Note that row vector $\boldsymbol{\pi}_{\omega}(t)$ represents the transient state distribution of the local queue at time slot $t$. As time $t$ evolves, the state transitions of local queue form a QBD process as shown in Fig.~\ref{fig:QBD}. From Fig.~\ref{fig:QBD}, we can see that all states communicate with other states, indicating that the Markov chain is irreducible. This together with the Markov chain having finite states in Fig.~\ref{fig:QBD} indicates it is recurrent. From Fig.~\ref{fig:QBD}, we can also see that every state could transition to itself, indicating that the Markov chain is aperiodic. For a recurrent and aperiodic Markov chain, it is ergodic and thus has a unique limit state distribution $\boldsymbol{\pi}_{\omega}^{*}$ \cite{Ross_Book07}.

Notice that $\boldsymbol{\pi}_{\omega}^{*}$ must satisfy the following equation
\begin{align}
\boldsymbol{\pi}_{\omega}^{*} &= \boldsymbol{\pi}_{\omega}^{*}\mathbf{P_0}, \label{eq:pi_omega}
\end{align}
where $\mathbf{P_0}$ is the transition matrix of the QBD process, which can be determined based on (\ref{eq:QBD_matrix}) by setting the corresponding sub-matrices according to (\ref{eq:P3-B0})-(\ref{eq:P3-AM}). 
In particular,
for $M=1$ and $M=2$, the transition matrix $\mathbf{P_0}$ is given by the following (\ref{eq:matrixP0M1}) and (\ref{eq:matrixP0M2}), respectively. 
\begin{align}	\label{eq:matrixP0M1}
\bf{P_0}=& 	
\left[
					\begin{array}{cc}
									\mathbf{B_1} &	\mathbf{B_0}\\ 
									\bf{B_2} &	\bf{A_M} \\
					\end{array}
\right],
\end{align}
\begin{align}	\label{eq:matrixP0M2}
\bf{P_0}=& 	
\left[
					\begin{array}{ccc}
									\mathbf{B_1} &	\mathbf{B_0}	&		\\ 
									\bf{B_2} &	\bf{A_1}	&	\bf{A_0}	\\
									 	&			\bf{A_2}	&	\bf{A_M} \\
					\end{array}
\right].
\end{align}
%For $M \geq 3$, the transition matrix $\mathbf{P_0}$ is determined based on (\ref{eq:QBD_matrix}) by replacing the corresponding sub-matices as (\ref{eq:P3-B0})-(\ref{eq:P3-AM}).
Thus, under the cases of $M=1$ and $M=2$, $\boldsymbol{\pi}_{\omega}^{*}$ could be easily calculated by equations (\ref{eq:piM11})-(\ref{eq:piM13}) and (\ref{eq:piM21})-(\ref{eq:piM24}), respectively. Due to the special structure of the matrix $\mathbf{A_2}$, which is the product of a column vector $\mathbf{c}$ by a row vector $\mathbf{r}$ \cite{Latouche_Book99}, $\boldsymbol{\pi}_{\omega}^{*}$  under the case $M \geq 3$ could be calculated by equations (\ref{eq:pi'_1})-(\ref{eq:pi'_4}).  

%The transition matrix $\mathbf{P_0}$ of the QBD process of the local-queue in Fig.~\ref{fig:QBD} is given by replacing sub-matrices in (\ref{eq:QBD_matrix}) as (\ref{eq:P3-B0})-(\ref{eq:P3-AM}). Then $\boldsymbol{\pi}_{\omega}^{*}$ is determined by 
%%\begin{align}
%%\boldsymbol{\pi}_{\omega}^{*} &= \boldsymbol{\pi}_{\omega}^{*}\mathbf{P_0} \label{eq:pi_omega}
%%\end{align}
%For $1 \leq M \leq 2$, $\boldsymbol{\pi}_{\omega}^{*}$ could be easily calculated from (\ref{eq:pi_omega}), and for $M \geq 3$, $\boldsymbol{\pi}_{\omega}^{*}$ could be calculated from equations  (\ref{eq:pi'_1})-(\ref{eq:pi'_4}), due to the special structure of the matrix $\mathbf{A_2}$, which is the product of a column vector $\mathbf{c}$ by a row vector $\mathbf{r}$ \cite{Latouche_Book99}.

%Based on the QBD theory \cite{Latouche_Book99}, the limit state distribution $\boldsymbol{\pi}_{\omega}^{*}$ of the QBD process could be determined by equations  (\ref{eq:pi'_1})-(\ref{eq:pi'_4}), due to the special structure of the matrix $\mathbf{A_2}$, which is the product of a column vector $\mathbf{c}$ by a row vector $\mathbf{r}$.

\section{Proof of Theorem~\ref{theorem:pi'}}	\label{appendix:theorem:pi'}

Suppose that the local-queue is in some state according to the steady state distribution $\boldsymbol{\pi}^{*}_{\Omega}$, then the source delay of a packet (say $Z$) is independent of the packet generating process after $Z$ is inserted into the local-queue and is also independent of the state transitions of the local-queue after $Z$ is removed from the local-queue. Such independence makes it possible to construct a simplified QBD process to study the source delay of packet $Z$, in which new packets generated after packet $Z$ are ignored, and once $Z$ is removed from the local-queue (or equivalently the local-queue transits to state $(0,0)$), the local-queue will stay at state $(0,0)$ forever. 

For the above simplified QBD process, its transition matrix $\mathbf{P_3}$ can be determined based on (\ref{eq:QBD_matrix}) by setting the corresponding sub-matrices as follows:

For $M=1$,
\begin{align}
\mathbf{B_0} &= \mathbf{0},  \label{eq:P41-B0}\\
\mathbf{B_1} &= [1],  \label{eq:P41-B1}\\
\mathbf{B_2} &= \mathbf{c},  \label{eq:P41-B2}\\
\mathbf{A_M} &= \mathbf{Q}. \label{eq:P41-AM}
\end{align}

For $M \geq 2$,
\begin{align}
\mathbf{B_0} &= \mathbf{0},  \label{eq:P4-B0}\\
\mathbf{B_1} &= [1],  \label{eq:P4-B1}\\
\mathbf{B_2} &= \mathbf{c},  \label{eq:P4-B2}\\
\mathbf{A_0} &= \mathbf{0},  \label{eq:P4-A0}\\
\mathbf{A_1} &= \mathbf{Q},  \label{eq:P4-A1}\\
\mathbf{A_2} &= \mathbf{c}\cdot\mathbf{r}, \label{eq:P4-A2} \\
\mathbf{A_M} &= \mathbf{Q}. \label{eq:P4-AM}
\end{align}

By rearranging $\mathbf{P_3}$ as 
\begin{align}
\mathbf{P_3}	&= 														\left[
																									\begin{array}{cc}
																									1	&	\mathbf{0}	\\
																									\mathbf{c}^{+}	&	\mathbf{T}
																									\end{array}
																						\right],
\end{align}
we can see that matrices $\mathbf{c}^{+}$ and $\mathbf{T}$ are determined as (\ref{eq:cplus_M1})-(\ref{eq:T_M2}).
With matrices $\mathbf{c}^{+}$, $\mathbf{T}$ and $\boldsymbol{\pi}^{*}_{\Omega}$, the probability mass function (\ref{eq:PMF}) and CDF (\ref{eq:CDF}) of the source delay follow directly from the theory of Phase-type distribution \cite{Latouche_Book99}.

%\section{Proof of Lemma~\ref{lemma:mean}}	\label{Appendix:mean}
Based on the probability mass function (\ref{eq:PMF}), the mean $\overline{U}$ of the source delay can be calculated by
\begin{align}
\overline{U} &=\sum_{u=1}^{\infty}u \cdot Pr\{U=u\}	\nonumber \\
						&= \sum_{u=1}^{\infty}u \boldsymbol{\pi}_{\Omega}^{-}\mathbf{T}^{u-1}\mathbf{c}^{+}	\nonumber \\
						&=\boldsymbol{\pi}_{\Omega}^{-} \bigg(\sum_{u=1}^{\infty}u \mathbf{T}^{u-1} \bigg)\mathbf{c}^{+}	 \label{eq:matrix_series}
\end{align}

Let 
\begin{align}
f(\mathbf{T})	= \sum_{u=1}^{\infty}u \mathbf{T}^{u-1},	\label{eq:f(T)_0}
\end{align}
and use $f(x)$ to denote its corresponding numerical series
\begin{align}
f(x)	&= \sum_{u=1}^{\infty}u x^{u-1}		\\
			&= (1-x)^{-2},	\quad \text{for} \quad x < 1.	\label{eq:converge}
\end{align}

Since above simplified QBD process is actually an absorbing Markov Chain with transition matrix $\mathbf{P_3}$, we know from Theorem~$11.3$ in \cite{Charles_1997} that 
\begin{align}
\lim_{k \rightarrow \infty}\mathbf{T}^k=\mathbf{0}.	\label{eq:T_limit}
\end{align}
Based on the property (\ref{eq:T_limit}) and the Theorem~$5.6.12$ in \cite{Horn_Book90}, we can see that the spectral radius $\rho(\mathbf{T})$ of matrix $\mathbf{T}$ satisfies following condition
\begin{align}
\rho(\mathbf{T})<1.	\label{eq:radius}
\end{align}

From (\ref{eq:f(T)_0}), (\ref{eq:converge}) and (\ref{eq:radius}), it follows that the matrix series $f(\mathbf{T})$ converge as
\begin{align}
f(\mathbf{T})	&= \lim_{g \rightarrow \infty}\sum_{u=1}^{g}u \mathbf{T}^{u-1} \\
	&= (\mathbf{I}-\mathbf{T})^{-2}	\label{eq:f(T)} 
\end{align}

After substituting (\ref{eq:f(T)}) into (\ref{eq:matrix_series}), (\ref{lemma:mean_1}) then follows.

The derivation of the variance of source delay (\ref{lemma:variance})	could be conducted in a similar way and thus is omitted here.

% Can use something like this to put references on a page
% by themselves when using endfloat and the captionsoff option.
\ifCLASSOPTIONcaptionsoff
  \newpage
\fi

\bibliographystyle{IEEEtran}
\bibliography{reference}

% biography section
% 
% If you have an EPS/PDF photo (graphicx package needed) extra braces are
% needed around the contents of the optional argument to biography to prevent
% the LaTeX parser from getting confused when it sees the complicated
% \includegraphics command within an optional argument. (You could create
% your own custom macro containing the \includegraphics command to make things
% simpler here.)
%\begin{IEEEbiography}[{\includegraphics[width=1in,height=1.25in,clip,keepaspectratio]{mshell}}]{Michael Shell}
% or if you just want to reserve a space for a photo:

%\begin{IEEEbiography}{Michael Shell}
%Biography text here.
%\end{IEEEbiography}
%
%% if you will not have a photo at all:
%\begin{IEEEbiographynophoto}{John Doe}
%Biography text here.
%\end{IEEEbiographynophoto}
%
%% insert where needed to balance the two columns on the last page with
%% biographies
%%\newpage
%
%\begin{IEEEbiographynophoto}{Jane Doe}
%Biography text here.
%\end{IEEEbiographynophoto}

% You can push biographies down or up by placing
% a \vfill before or after them. The appropriate
% use of \vfill depends on what kind of text is
% on the last page and whether or not the columns
% are being equalized.

%\vfill

% Can be used to pull up biographies so that the bottom of the last one
% is flush with the other column.
%\enlargethispage{-5in}

% that's all folks
\end{document}